\theoremstyle{plain}
\newtheorem{theorem}{Theorem}
\newtheorem{corollary}[theorem]{Corollary}
\newtheorem{lemma}[theorem]{Lemma}
\newtheorem{proposition}[theorem]{Proposition}
\theoremstyle{definition}
\newtheorem{definition}{Definition}
\newtheorem{example}[definition]{Example}
\newtheorem{remark}[definition]{Remark}
\newcommand{\bbS}{\mathbb{S}}
\newcommand{\bbT}{\mathbb{T}}
\newcommand{\bbV}{\mathbb{V}}
\newcommand{\bbE}{\mathbb{E}}
\newcommand{\bbL}{\mathbb{L}}
\newcommand{\bbP}{\mathbb{P}}
\DeclareMathAlphabet{\mathbfsl}{OT1}{ppl}{b}{it} 
\newcommand{\vp}{\mathbfsl{p}}
\newcommand{\vx}{\mathbfsl{x}}
\newcommand{\vy}{\mathbfsl{y}}
\newcommand{\vz}{\mathbfsl{z}}
\newcommand{\vT}{\mathbfsl{T}}
\newcommand{\vU}{\mathbfsl{U}}
\newcommand{\TBSCu}{\mathbfsl{T}_{\rm BSC}^{(u)}}
\newcommand{\TBSCl}{\mathbfsl{T}_{\rm BSC}^{(\ell)}}
\newcommand{\TBEC}{\mathbfsl{T}_{\rm BEC}}
\newcommand{\cC}{\mathcal{C}}
\newcommand{\cG}{\mathcal{G}}
\newcommand{\cH}{\mathcal{H}}
\newcommand{\cM}{\mathcal{M}}
\newcommand{\cO}{\mathcal{O}}
\newcommand{\perm}{{\rm per}}
\newcommand{\val}{{\rm val}}
\newcommand{\permsec}{{\rm per}^{(2)}}
\newcommand{\perror}[1]{P_{\rm error}\left( #1 \right)}
\newcommand{\prob}[1]{{\rm Prob}\left( #1 \right)}
\newcommand{\id}{{\rm id}}
\DeclareMathOperator*{\argmax}{arg\,max}
\newcommand{\etal}{{\em et al.}}
\newcommand{\todo}[1]{{\color{red} TODO: #1}}
\title{Efficient Algorithms for\\ the Bee-Identification Problem}
\author{
 \IEEEauthorblockN{
	Han Mao Kiah\IEEEauthorrefmark{1},
	Alexander Vardy\IEEEauthorrefmark{2}, and
	Hanwen Yao\IEEEauthorrefmark{2}\\
	}
	\IEEEauthorblockA{\IEEEauthorrefmark{1}\small School of Physical and Mathematical Sciences, Nanyang Technological University, Singapore}\\ 
   \IEEEauthorblockA{\IEEEauthorrefmark{2}\small Department of Electrical \& Computer Engineering, University of California San Diego, LA Jolla, CA, USA}\\
 \IEEEauthorblockA{\small Emails: 
 	hmkiah@ntu.edu.sg, avardy@ucsd.edu, hay125@eng.ucsd.edu}
\thanks{Parts of this work were presented in the IEEE International Symposium on Information Theory (ISIT2021) \cite{Kiah2021}.} 
}
\begin{document}
\date{}

\maketitle

\hspace*{-10pt}\begin{abstract}
The bee-identification problem, formally defined by Tandon, Tan and Varshney (2019), requires the receiver to identify ``bees'' using a set of unordered noisy measurements.
In this previous work, Tandon, Tan and Varshney studied error exponents and showed that decoding the measurements {\em jointly} results in a significantly smaller error exponent.

In this work, we study algorithms related to this joint decoder.
First, we demonstrate how to perform joint decoding efficiently.
By reducing to the problem of finding perfect matching and minimum-cost matchings, 
we obtain joint decoders that run in time quadratic and cubic in the number of ``bees'' for the binary erasure (BEC) and binary symmetric channels (BSC), respectively.
Next, by studying the matching algorithms in the context of channel coding, we further reduce the running times by using classical tools like peeling decoders and list-decoders.
In particular, we show that our identifier algorithms when used with Reed-Muller codes terminates in almost linear and quadratic time for BEC and BSC, respectively.

Finally, for explicit codebooks, we study when these joint decoders fail to identify the ``bees'' correctly. 
Specifically, we provide practical methods of estimating the probability of erroneous identification for given codebooks.
\end{abstract}


\section{Introduction}\label{sec:intro}

Imagine $M$ bees, each tagged with a unique barcode, flying in a beehive. 
We take a picture of the bees and obtain an unordered set of noisy barcodes.
The {\em bee-identification problem} -- proposed and formally defined by Tandon \etal -- requires one to uniquely identify each bee from the noisy measurements \cite{TTV2019}.
Besides problems involving multiple target tracking \cite{Poore2006,Gernat2018}, the bee-identification problem is also relevant to other applications (see \cite{TTV2019, Pananjady2018} for other examples). 
One recent possible application is that of pooled testing for viral RNA like COVID-19. 
In a recent experiment \cite{Schmid2020}, Schmid-Burgk \etal{} developed a procedure where multiple DNA samples are pooled, sequenced and analyzed {\em en masse} for the COVID-19 infection. In their procedure, barcodes with high Levenshtein distance were inserted in the DNA samples and by decoding the barcodes individually, they were able to identify the viral DNA samples. Later, the procedures were validated by other groups who performed similar experiments \cite{Li2020, James2020, Peto2020, Booeshaghi2020}.

Indeed, to recover the original barcodes, a naive approach is to look at each barcode separately and decode them {\em independently}.
However, certain bees/DNA samples may be assigned to the same barcode and in this case, we fail to identify all the bees/DNA samples.
In contrast, one can look at all the barcodes {\em jointly} and determine the best way to assign the barcodes so that likelihood of correct identification is maximized.
The latter is termed as joint decoding and 
in \cite{TTV2019}, Tandon \etal{} showed that joint decoding results in significantly smaller probability of wrong or failed identification. Specifically, they quantified the gap between the error exponents of independent and joint decoding.
Interestingly, in a follow up work, Tandon \etal{} showed that the error exponents are the same for both independent and joint decoding when bees are absent with certain probability \cite{TTV2020}.

In \cite{TTV2019}, Tandon \etal{} wrote that the lower error exponent of joint decoding comes at a ``cost of increased computational complexity''. 
They then posited that joint decoding entails a computationally prohibitive exhaustive search amongst the $M!$ possible permutations and explored ideas that combine both independent and joint decoding.
 
Fortunately, an exhaustive search is not necessary and
in this work, we demonstrate that {\em efficient joint decoding is achievable}. Specifically, for the binary erasure and binary symmetric channels, we reduce the bee-identification problem to the problem of finding a perfect matching and minimum-cost matching, respectively.
Hence, applying the well-known Hopcraft-Karp algorithm \cite{HopcroftKarp1973} and Hungarian method \cite{Kuhn1955}, respectively, we can identify the bees in time polynomial in $M$.

We then study the (minimum-cost) matching problem in the context of channel coding and show that the complexity of bee-identification problem can be further reduced. 
In particular, for the binary erasure channel, we showed that 
when we deploy the celebrated Reed-Muller codes, 
the bee-identification problem can be resolved in {\em almost $O(M)$ time on average}. This is essentially optimal as $\Omega(M)$ time is required to read all $M$ barcodes. 
Therefore, not only is the ``cost of increased computational complexity'' for joint decoding acceptable, but, in some cases, the additional complexity cost is negligible.

Finally, we also investigate the probability of erroneous identification for these joint decoders. 
Specifically, in this work, by relating these probability computations to the problem of permanent computation, we develop practical methods of analyzing and estimating these error probabilities for any {\em code of interest}. 
In contrast, in \cite{TTV2020}, Tandon \etal{} fixed a certain code rate $R>0$ and determined a corresponding probability estimate $p(R)$. Then using random coding techniques, they showed the existence of a code whose rate is approximately $R$ with the property: 
under joint decoding, the probability of erroneous identification is at most $p(R)$.

In the next section, we formally describe the bee-identification problem given in \cite{TTV2020} and then state our technical contributions.
For the ease of exposition, we study the bee-identification problem for the binary erasure channel (BEC) and binary symmetric channel (BSC). 
The methods in this paper can be extended for larger alphabets and 
to perform joint maximum-likelihood decoding for other channels.

Finally, to conclude this introduction, we mention certain work that followed the conference version of this work. In \cite{Chrisnata2022}, motivated by applications that involve DNA strands, Chrisnata \etal{} studied a version of the bee-identification problem where multiple outputs (from a single input) are available. 
In the same paper, Chrisnata \etal{} also studied the bee-identification problem in the context of deletion channels.
 

\newpage

\section{Problem Definition}

For an integer $M$,  we let $[M]$ denote the set of integers $\{1,2,\ldots, M\}$.
The set of all permutations over $[M]$ is denoted by $\bbS_M$ and 
we write a permutation $\sigma\in \bbS_M$ as $\sigma(1)\sigma(2)\cdots \sigma(M)$.

Consider a binary code $\cC\subseteq \{0,1\}^n$ of length $n$ with  $M$ {\em codewords} $\vx_1,\vx_2,\ldots, \vx_M$. 
Consider, in addition, a binary channel where each output $\vy$ given an input $\vx$ is received with probability $P(\vy|\vx)$.
We send {\em all} $M$ codewords over the channel and obtain an unordered set of $M$ outputs $\{\vy_1, \vy_2, \ldots , \vy_M\}$.
Note that $\vy_i$ is not necessarily the channel output of $\vx_i$ and in fact, the task of the {\em bee-identification problem} is to find a length-$M$ permutation $\sigma$ so that $\vy_{\sigma(i)}$ is indeed the channel output of the input $\vx_i$ for all $i\in[M]$. 
Assuming the channels are independent, the {\em joint decoder} finds a length-$M$ permutation $\sigma^*$ that maximizes the probability 
$\prod_{i\in [M]} P(\vy_{\sigma(i)}| \vx_i)$. In other words, the joint decoder returns a permutation $\sigma^*$ such that 
\[ \sigma^*\in \argmax_{\sigma\in\bbS_M} \prod_{i\in [M]} P(\vy_{\sigma^*(i)}| \vx_i)\,.\]

In this paper, we first study efficient ways of performing joint decoding, that is, computing the permutation $\sigma^*$.
Since the input to our problem is a set of $M$ $n$-bit codewords, 
a trivial lower bound on complexity is $\Omega(Mn)$ and 
our running time analysis in most parts will be with respect to the parameter $M$.
Also, as the code size $M$ represents ``the number of bees'', 
we assume a reasonable growth rate of $M$ with respect to $n$, that is, polynomial in $n$. Hence, in most parts of the paper, we suppress factors involving $n$ in the big-O notation.
We note that this is somewhat different from the setting in \cite{TTV2019} where 
$M=2^{Rn}$ for some positive rate $R$. 

Next, for a fixed code $\cC$, we also investigate when the joint decoder fails to return the correct permutation $\sigma$. 
Namely, if $\sigma^*$ is the permutation returned by the joint decoder,  we provide estimates on the quantity, $\perror{\cC}$, the probability of the event where $\sigma\ne \sigma^*$.
Similar estimates were given in \cite{TTV2020}.
Specifically, for a fixed value $0<R<1$, Tandon \etal{} found an exponent $e(R)>0$ such that the following holds: 
there exists a family of codes $\{\cC_n\}_{n\ge 1}$ with blocklengths $n$ and rates approaching $R$ so that $\perror{\cC_n}\le 2^{-e(R)n}$.
As their derivations rely on random coding techniques, it is unclear whether their estimates apply to explicit codes. 
In contrast, we fix a specific code $\cC$ in this work and provide practical methods of estimating $\perror{\cC}$.

\subsection{Our Contributions}

We summarize our contributions here. 




\begin{itemize}
	\item For the BEC, we provide a joint decoder -- Joint Erasure Decoding Identifier (JEDI) -- that runs in $O(M^2)$ time. For the family of $r$-th order Reed-Muller codes and any small $\epsilon$, we show that on average, JEDI terminates in $O(M^{1+\epsilon})$ time when $r\ge 2$ and in $O(M^{2+\epsilon})$ time when $r=1$.
	
	\item For the BSC, we provide a joint decoder -- Joint Minimum-Distance Decoding Identifier (JMDI)-- that runs in $O(M^3)$ time. 
	To improve the running time, we approximate the exact solution using ideas from list-decoding and propose the Joint List Decoding Identifier (JLDI). 
	For the family of $r$-th order Reed-Muller codes, we show that for sufficiently  small crossover probability $p$, JLDI terminates in $O(M^{2+\epsilon})$ time for any small $\epsilon$ and is almost as good as JMDI (see Theorem~\ref{thm:rm-jldi} for the formal statement).
	
	\item Finally, for a fixed code $\cC$, we provide probability estimates on when our joint decoders are erroneous. Specifically, using trellis-based techniques, we provide methods to compute upper and lower bounds for error probability in $O(M2^{M})$ and $O(M^{3/2}2^{2M})$ time, respectively. We also derive a closed formula that computes a weaker upper bound efficiently. 
\end{itemize}

\section{Joint Erasure Decoding Identifier}
\label{sec:jedi}

In this section, we consider the binary erasure channel (BEC).
Even though the case for BECs was not studied in \cite{TTV2019}, 
we investigate the joint decoder for the erasure channel as 
it illustrates certain key graph theoretic concepts for the Joint Minimum-Distance Decoding Identifier described in Section~\ref{sec:jmdi}.

Given an integer $M$, a {\em balanced bipartite graph $\cG$ of order $M$} is an undirected graph with $2M$ nodes: $M$ {\em left} and $M$ {\em right} nodes, where every edge connects a left node to a right node.
A {\em matching} $\cM$ of $\cG$ is a subset of edges where no two edges are incident on the same node. 
Clearly, any matching of a balanced bipartite graph $\cG$ of order $M$ has at most $M$ edges. 
If a matching $\cM$ contains exactly $M$ edges, we say that $\cM$ is {\em perfect}. 

Let us label the left and right nodes of a balanced bipartite graph $\cG$ of order $M$ with the $M$ inputs $\vx_1,\vx_2,\ldots, \vx_M$ and the $M$ outputs $\vy_1, \vy_2, \ldots , \vy_M$, respectively.
Suppose that we have a perfect matching $\cM$ of $\cG$. 
Then we can write the edges of $\cM$ as 
$(x_1, y_{\sigma(1)}), (x_2, y_{\sigma(2)}),\ldots, (x_M, y_{\sigma(M)})$ and
it follows from the definition of a matching that $\sigma$ is a permutation of length $M$. In other words, we can represent a perfect matching with a length-$M$ permutation. Conversely, given a length-$M$ permutation $\sigma$,
we obtain a perfect matching of $\cG$ if $(\vx_i,\vy_{\sigma(i)})$ is an edge of $\cG$ for all $i\in[M]$. Therefore, for the rest of this paper, we use permutations and matchings interchangeably.

We are now ready to describe the main contribution of this section: an efficient implementation of a joint decoder for erasures.

\vspace{3mm}

\noindent{\bf Joint Erasure Decoding Identifier (JEDI)}.

\noindent{\sc Input}: 
A codebook $\cC = \{\vx_1,\vx_2,\ldots, \vx_M\}\subseteq\{0,1\}^n$ of size $M$ and 
a set of $M$ channel outputs $\{\vy_1, \vy_2, \ldots , \vy_M\} \subseteq\{0,1,?\}^n$.\\[1mm]
\noindent{\sc Output}: A permutation $\sigma$ such that $\vy_i$ matches $\vx_{\sigma(i)}$ for all $i\in [M]$ if 
there is a unique $\sigma$. Otherwise, the decoder declares {\sc Failure}. 
\begin{enumerate}[(1)]
	\item We draw a balanced bipartite graph $\cG$ of order $M$.
	Here, the $M$ codewords are the left nodes while the $M$ channel outputs are the right nodes.
	For $i,j \in [M]$, we draw an edge between $\vx_i$ and $\vy_j$ if and only if $\vy_j$ {matches} $\vx_i$. Here, we say that $\vy$ {\em matches} $\vx$ if both $\vy$ coincides with $\vx$ on positions that are not erased.
	Henceforth, we refer to this graph $\cG$ as the {\em input-output graph}.
	
	\item Determine if there is a unique perfect matching in $\cG$. If the matching $\sigma$ is unique, return $\sigma$. If the matching is not unique, return {\sc Failure}.  
\end{enumerate}

Here, we discuss the running time of JEDI. 
For general codebooks, Step 1 can be implemented in $O(M^2)$ time.
Next,  we let $\cG$ be the input-output graph constructed in Step 1 and $E$ to be the number of edges in $\cG$.
Before we analyze Step 2, we first state some properties of $\cG$.
For each codeword $\vx\in \cC$, let $Y(\vx)$ be its corresponding channel output and we have that $Y(\vx)$ matches $\vx$. Therefore, the set of edges $\{(\vx,Y(\vx)) :  \vx\in \cC\}$ is a perfect matching of $\cG$.
Hence, we have two sub-tasks in Step 2: finding a perfect matching (since it exists) and determining if the matching is unique.
For the first sub-task, we can use the Hopcraft-Karp algorithm \cite{HopcroftKarp1973} to find a perfect matching in $O(E\sqrt{M})$ time. 
For the second sub-task, we can follow the methods described in Fukada \cite{Fukada1994} and Hoang \etal{} \cite{Hoang2006}, and then determine if another perfect matching exists in $O(M+E)=O(E)$ time (since $M\le E$). 
Hence, combining the analysis of both sub-tasks, we have that Step 2 can be implemented in $O(E\sqrt{M})=O(M^{2.5})$ time. 
Therefore, this simple analysis shows that JEDI runs in $O(M^{2.5})$ time.

Nevertheless, the complexity of JEDI can be further reduced. 
We do so by {\em improving the running time of Step 2}.
Crucially, we exploit the fact that $\cG$ contains a perfect matching.
Now, if we are able to determine early if there is more than one perfect matching,  we need not continue to find a perfect matching.
To do so, we modify the classic peeling decoders used in graph-based codes \cite{Zyablov1976}. 
Intuitively, we search for degree-one nodes in the graph $\cG$. For any such node $u$, the edge $uv$ incident to $u$ necessarily belongs to a perfect matching and hence, we add it to the matching.
We then remove both nodes $u$ and $v$, and all other edges incident to $v$ and repeat the search for degree-one nodes. 
We have two scenarios. 
In the first scenario, we remove all nodes from $\cG$ and end up with a perfect matching.
In the second scenario, all remaining nodes have degree at least two and it can be shown that $\cG$ contains at least two perfect matchings (see Section~\ref{sec:pma}). 
Thus, in this case, we can terminate our search earlier.
A formal description is given below.

\vspace{2mm}
\noindent{\bf Peeling Matching Algorithm (PMA)}.

\noindent{\sc Input}: A bipartite graph $\cG$ (with $M$ left and $M$ right vertices) that contains at least one perfect matching.\\[1mm]
\noindent{\sc Output}: A perfect matching $\cM$ of $\cG$ if it is unique. Otherwise, {\sc Failure} is declared. 
\begin{enumerate}[(1)]
	\item Initialize $\cM$ to the empty set.
	\item Find a node $u$ in $\cG$ with degree one. Here, $u$ may be a left {\em or} right node. If there is no such node, go to Step 6.
	\item Let $uv$ be the unique edge incident to $u$ and add $uv$ to $\cM$.
	\item Remove nodes $u$ and $v$ and all edges incident to $v$.
	\item Repeat Step 2. 
	\item If $\cM$ is a perfect matching, return $\cM$. Otherwise, $|\cM|<M$ and we declare {\sc Failure}.
\end{enumerate}

\begin{example}\label{exa:JEDI}
	Consider the simplex code of length seven. More concretely, we consider the linear code with $M=8$ codewords generated by the matrix
	$\begin{bmatrix}
		1 & 0 & 0 & 0 & 1 & 1 & 1\\
		0 & 1 & 0 & 1 & 0 & 1 & 1\\
		0 & 0 & 1 & 1 & 1 & 0 & 1\\
	\end{bmatrix}$.
	\begin{enumerate}[(a)]
		\item Suppose the channel outputs are: 
		\[\begin{array}{cccc}
		00?????, & 001????, & ??????0, & ?0?0?1?, \\
		11????0, & ????001, & 0??????, & ????110.
		\end{array}\]
		Then the bipartite graph $\cG$ constructed in Step 1 of JEDI is given below. 
		Highlighted in {\color{blue}blue} is the unique bipartite matching $\cM$ in $\cG$.
		\begin{center}	
		\begin{tikzpicture}[scale = 0.8]
		\tikzset{vertex/.style = {circle,fill, inner sep = 2pt}}
		\tikzset{match/.style = {- ,blue,thick}}
		\node[vertex] (x1) at  (1,0) {};    \node at (-0.3,0)    {\tt 0000000 }; 
		\node[vertex] (x2) at  (1,-0.5) {}; \node at (-0.3,-0.5) {\tt 1000111 }; 
		\node[vertex] (x3) at  (1,-1) {};   \node at (-0.3,-1)   {\tt 0101011 }; 
		\node[vertex] (x4) at  (1,-1.5) {}; \node at (-0.3,-1.5) {\tt 0011101 }; 
		\node[vertex] (x5) at  (1,-2) {};   \node at (-0.3,-2)   {\tt 1101100 }; 
		\node[vertex] (x6) at  (1,-2.5) {}; \node at (-0.3,-2.5) {\tt 1011010 }; 
		\node[vertex] (x7) at  (1,-3) {};   \node at (-0.3,-3)   {\tt 0110110 }; 
		\node[vertex] (x8) at  (1,-3.5) {}; \node at (-0.3,-3.5) {\tt 1110001 }; 
		
		\node[vertex] (y1) at  (3,0)    {}; \node at (4.3,0)    {\tt 00????? }; 
		\node[vertex] (y2) at  (3,-0.5) {}; \node at (4.3,-0.5) {\tt 001???? };
		\node[vertex] (y3) at  (3,-1)   {}; \node at (4.3,-1)   {\tt ??????0 };
		\node[vertex] (y4) at  (3,-1.5) {}; \node at (4.3,-1.5) {\tt ?0?0?1? };
		\node[vertex] (y5) at  (3,-2)   {}; \node at (4.3,-2)   {\tt 11????0 };
		\node[vertex] (y6) at  (3,-2.5) {}; \node at (4.3,-2.5) {\tt ????001 };
		\node[vertex] (y7) at  (3,-3)   {}; \node at (4.3,-3)   {\tt 0?????? };
		\node[vertex] (y8) at  (3,-3.5) {}; \node at (4.3,-3.5) {\tt ????110 };
		
		\draw[match] (y1) -- (x1); \draw(y1) -- (x4);
		\draw[match] (y2) -- (x4);
		\draw (y3) -- (x7); \draw (y3) -- (x1); \draw (y3) -- (x5); \draw[match] (y3) -- (x6);
		\draw[match] (y4) -- (x2);
		\draw[match] (y5) -- (x5); 
		\draw[match]  (y6) -- (x8);
		\draw (y7) -- (x1); \draw[match] (y7) -- (x3); \draw (y7) -- (x4); \draw (y7) -- (x7);
		\draw[match] (y8) -- (x7);
		\end{tikzpicture}
		\end{center}
	Here, we list the edges {\em in the order} they are added to $\cM$ according to PMA.
	\[
	\begin{array}{cccc}
	(1000111, ?0?0?1?), &  (0101011, 0??????), &
	(1011010, ??????0), & (1110001, ????001), \\
	(0011101, 001????), &  (1101100, 11????0), &
	(0110110, ????110), &  (0000000, 00?????).
	\end{array}\]  
	
	\item Suppose the channel outputs are: 
	\[\begin{array}{cccc}
	0000000, & ?0??1?1, & 0110110, & ?0??1?1, \\
	1101100, & 1110001, & 0110110, & 1011010.
	\end{array}\]
	Then the bipartite graph $\cG$ constructed is given below. 
	\begin{center}	
		\begin{tikzpicture}[scale = 0.8]
		\tikzset{vertex/.style = {circle,fill, inner sep = 2pt}}
		\tikzset{cycle/.style = {- ,red,very thick}}
		\node[vertex] (x1) at  (1,0) {};  	\node at (-0.3,0)  {\tt 0000000 };
		\node[vertex] (x2) at  (1,-0.5) {}; \node at (-0.3,-0.5) {\tt 1000111 };
		\node[vertex] (x3) at  (1,-1) {}; 	\node at (-0.3,-1) {\tt 0101011 };
		\node[vertex] (x4) at  (1,-1.5) {}; \node at (-0.3,-1.5) {\tt 0011101 };
		\node[vertex] (x5) at  (1,-2) {}; 	\node at (-0.3,-2) {\tt 1101100 };
		\node[vertex] (x6) at  (1,-2.5) {}; \node at (-0.3,-2.5) {\tt 1011010 };
		\node[vertex] (x7) at  (1,-3) {}; 	\node at (-0.3,-3) {\tt 0110110 };
		\node[vertex] (x8) at  (1,-3.5) {}; \node at (-0.3,-3.5) {\tt 1110001 };
		
		\node[vertex] (y1) at  (3,0)    {}; \node at (4.3,0)    {\tt 0000000 };
		\node[vertex] (y2) at  (3,-0.5) {}; \node at (4.3,-0.5) {\tt ?0??1?1 };
		\node[vertex] (y3) at  (3,-1)   {}; \node at (4.3,-1)   {\tt 0110110 };
		\node[vertex] (y4) at  (3,-1.5) {}; \node at (4.3,-1.5) {\tt ?0??1?1 };
		\node[vertex] (y5) at  (3,-2)   {}; \node at (4.3,-2)   {\tt 1101100 };
		\node[vertex] (y6) at  (3,-2.5) {}; \node at (4.3,-2.5) {\tt 1110001 };
		\node[vertex] (y7) at  (3,-3)   {}; \node at (4.3,-3)   {\tt 0110110 };
		\node[vertex] (y8) at  (3,-3.5) {}; \node at (4.3,-3.5) {\tt 1011010 };
		
		\draw (y1) -- (x1);
		\draw[cycle] (x2) -- (y4);\draw[cycle] (y4) -- (x4); \draw[cycle] (x4) -- (y2); \draw[cycle] (y2) -- (x2);
		\draw (x3) -- (y7);
		\draw (x5) -- (y5);
		\draw (x6) -- (y8);
		\draw (x7) -- (y3);
		\draw (x8) -- (y6);
		
		\end{tikzpicture}
	\end{center}
		Highlighted in {\color{red}red} is the edges remaining  after all degree-one nodes are removed. Hence, PMA declares {\sc Failure}. \qedhere
	\end{enumerate}
\end{example}

The following lemma on the correctness of PMA and its running time can be proved using the notion of stopping sets in peeling decoders \cite{Zyablov1976}. 
For completeness, we provide a detailed proof in Section~\ref{sec:pma}.

\begin{lemma}\label{lem:pma}
	Let $\cG$ be a balanced bipartite graph of order $M$ with $E$ edges. Suppose that $\cG$ contains at least one perfect matching.
If the perfect matching is unique, then the output $\cM$ of PMA is the perfect matching.
Otherwise, PMA declares {\sc Failure}.
Furthermore, PMA terminates in $O(E)$ time.
\end{lemma}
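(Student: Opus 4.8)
The plan is to follow the set of perfect matchings of $\cG$ as PMA peels away degree-one nodes, showing that peeling never changes \emph{how many} perfect matchings remain, and that the algorithm can fail to exhaust the graph only when there is more than one. First I would establish a \textbf{forced-edge invariant}. Whenever PMA processes a degree-one node $u$ with incident edge $uv$, every perfect matching of the current graph must match $u$, and $uv$ is the only available edge, so $uv$ lies in \emph{every} perfect matching. Deleting $u$, $v$, and all edges at $v$ therefore removes exactly one left and one right node (an edge joins the two sides), leaving a balanced bipartite graph $\cG'$, and the map $N\mapsto N\cup\{uv\}$ is a bijection between perfect matchings of $\cG'$ and those of the current graph. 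By induction on the number of peeling steps, the remaining graph $\cG_t$ stays balanced, still contains a perfect matching (using the hypothesis that $\cG$ does), and its perfect matchings correspond bijectively -- via union with the accumulated partial matching $\cM_t$ -- to the perfect matchings of $\cG$. In particular the count of perfect matchings is preserved, and since $\cG_t$ always has a perfect matching, every node of $\cG_t$ has degree at least one.

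The hard part will be the \textbf{certificate of non-uniqueness} when PMA gets stuck: if PMA reaches Step 6 with a nonempty $\cG_t$, then (no degree-one node, no degree-zero node) every remaining node has degree at least two, and I must show this forces a second perfect matching. I would argue by producing an alternating cycle. Fix the perfect matching $N$ of $\cG_t$ guaranteed by the invariant, orient each edge of $N$ from its right endpoint to its left endpoint, and orient each non-matching edge from left to right. Then every left node has out-degree equal to its number of non-matching edges, namely $\deg-1\ge 1$, and every right node has out-degree one (its matching edge). Since every node has positive out-degree, the digraph contains a directed cycle, and by construction this cycle alternates non-matching, matching, non-matching, $\ldots$; that is, it is an $N$-alternating cycle $C$. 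Then $N\triangle C$ is a perfect matching of $\cG_t$ distinct from $N$, and pulling back through the invariant's bijection gives at least two perfect matchings of $\cG$. (This is exactly the stopping-set phenomenon alluded to in the algorithm description; the orientation trick keeps it clean, and the one point to verify carefully is that the directed cycle is genuinely $N$-alternating, which follows from how the two edge classes were oriented.)

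Next I would assemble correctness. PMA halts after at most $M$ peeling iterations, since each removes two nodes. If $\cG$ has a unique perfect matching, the invariant forces uniqueness in every $\cG_t$, so no nonempty $\cG_t$ can have minimum degree at least two (that would yield a second matching by the previous paragraph); hence a degree-one node exists until $\cG_t$ is empty, at which point $\cM$ has $M$ edges and the invariant identifies it as the unique perfect matching, which PMA returns. If instead $\cG$ has at least two perfect matchings, the invariant keeps at least two matchings in every $\cG_t$, so $\cG_t$ is never empty (the empty graph has exactly one, empty, matching); thus PMA must halt at Step 6 on a nonempty graph, giving $|\cM|<M$ and hence {\sc Failure}.

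Finally, for the $O(E)$ running time I would implement PMA with standard peeling-decoder bookkeeping: store adjacency lists, maintain each node's current degree, and keep a queue of degree-one nodes. Initialization costs $O(M+E)$. Deleting a node $v$ scans its incident edges once to decrement neighbors' degrees and enqueue any new degree-one node, so each edge is touched only when one of its endpoints is deleted, for total work $O(M+E)$. Since $\cG$ contains a perfect matching we have $E\ge M$, so the bound is $O(E)$.
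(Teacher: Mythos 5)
Your proof is correct and follows the same essential strategy as the paper: peel forced edges, certify non-uniqueness via an alternating cycle when the residual graph has minimum degree two, and charge the work to edges via a queue of degree-one nodes. The differences are refinements rather than a different route, but they are worth noting. The paper splits the argument into a black-box characterization (its Lemma~\ref{lem:alternating}, cited from Fukuda and Hoang \emph{et al.}: a perfect matching is unique iff it admits no alternating cycle) plus a walk-following construction that finds an alternating cycle in the stuck graph (``follow matched edge, then unmatched edge, until a left node repeats''). You instead construct the second perfect matching directly as $N\triangle C$, using the orientation trick (matching edges right-to-left, non-matching edges left-to-right, so every vertex has positive out-degree and a directed cycle exists and automatically alternates); this makes the argument self-contained and replaces the paper's slightly informal ``eventually two left nodes coincide'' step with a clean out-degree count. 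You also make explicit two points the paper leaves implicit: the forced-edge bijection invariant, which is needed both to rule out isolated vertices in the residual graph (so ``no degree-one node'' really does mean ``minimum degree two'') and to establish the direction that if $\cG$ has two or more perfect matchings then PMA cannot terminate with a complete matching -- the paper's claim, as written, only covers the converse direction (stuck implies non-unique). The running-time argument is identical to the paper's.
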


%
%
%
%
%

Therefore, since $E\le M^2$, we have that JEDI terminates in $O(M^2)$ time. 
Now, this running time analysis assumes the worst case where $\cG$ is a complete bipartite graph.
By design, the codebook $\cC$ is chosen such that most erasure patterns are correctable with high probability. 
In other words, each right node or channel output is expected to match with exactly one left node or codeword, and so, 
we expect the graph $\cG$ to be sparse.

It turns out that the expected number of edges in $\cG$ is given by the distance enumerator (see for example \cite{MS77}). 
Specifically, given a code $\cC$ of length $n$, we define $B_i$ to be the number of pairs of (not necessarily distinct) codewords of distance $d$. So, we have that $B_0=M$ and $\sum_{i=0}^n B_i=M^2$.
We then define the {\em distance enumerator of code $\cC$} to be polynomial $B(z) = \sum_{i=0}^n B_i z^i$. 

\begin{lemma}\label{lem:wt-enum}
	Consider a BEC with erasure probability $p$.
	If the distance enumerator of code $\cC$ is $B(z)$,
	then expected number of edges in $\cG$ constructed in JEDI is given by $B(p)$.
\end{lemma}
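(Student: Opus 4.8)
The plan is to compute the expected number of edges in $\cG$ by linearity of expectation, reducing the count to a sum over pairs of codewords. First I would observe that an edge of $\cG$ is an incidence between a left node $\vx_i$ and a right node $\vy_j$, and that $\vy_j$ is itself the channel output of some (unknown) input codeword. So I would index the right nodes by their true source: for each codeword $\vx\in\cC$, let $Y(\vx)$ denote its actual channel output. Then the total number of edges is $\sum_{\vx,\vx'\in\cC} \mathbf{1}[Y(\vx') \text{ matches } \vx]$, a double sum over ordered pairs of codewords (including $\vx=\vx'$). Taking expectations and applying linearity, I reduce the problem to computing, for each ordered pair $(\vx,\vx')$, the probability that the BEC output of $\vx'$ matches $\vx$ on all non-erased positions.

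The key combinatorial step is to evaluate this single-pair matching probability. Under the BEC with erasure probability $p$, each coordinate of $Y(\vx')$ is independently either the true bit of $\vx'$ (with probability $1-p$) or an erasure ``$?$'' (with probability $p$). By the definition of ``matches'' given in Step~1 of JEDI, $Y(\vx')$ matches $\vx$ exactly when, on every unerased coordinate, the transmitted bit of $\vx'$ agrees with the corresponding bit of $\vx$. I would examine this coordinate by coordinate. On a coordinate where $\vx$ and $\vx'$ agree, the unerased bit is automatically correct, so matching is guaranteed regardless of erasure. On a coordinate where $\vx$ and $\vx'$ disagree, matching requires that coordinate to be erased, which happens with probability $p$. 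Since the coordinates are independent, the per-pair matching probability factors as $p^{d(\vx,\vx')}$, where $d(\vx,\vx')$ is the Hamming distance between $\vx$ and $\vx'$.

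It then remains to assemble these terms. The expected edge count becomes $\sum_{\vx,\vx'\in\cC} p^{d(\vx,\vx')}$, and I would group the ordered pairs by their Hamming distance. By the definition of the distance enumerator, exactly $B_i$ ordered pairs have distance $i$ (with $B_0=M$ accounting for the $\vx=\vx'$ diagonal, each of which trivially matches since $p^0=1$). Substituting, the sum collapses to $\sum_{i=0}^n B_i\, p^i = B(p)$, which is precisely the claimed value.

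I expect the only genuine subtlety to lie in the per-pair probability computation, specifically in verifying that the independence of the BEC coordinates makes the product factor correctly and that the definition of ``matches'' (agreement on \emph{unerased} positions only) is handled properly on the agreeing versus disagreeing coordinates. The rest — linearity of expectation and the regrouping into the distance enumerator — is routine bookkeeping, with the main care being to correctly include the diagonal terms $\vx=\vx'$ so that $B_0=M$ contributes exactly the $M$ guaranteed edges of the true matching.
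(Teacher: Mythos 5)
Your proposal is correct and follows essentially the same route as the paper's proof: index the right nodes by their true source codeword, compute the per-pair matching probability as $p^{d_H(\vx,\vx')}$ (since exactly the disagreeing coordinates must be erased), and apply linearity of expectation over all ordered pairs to obtain $\sum_i B_i p^i = B(p)$. Your write-up is in fact somewhat more careful than the paper's, which states the per-pair probability without the coordinate-by-coordinate justification and contains a minor typo ($z$ in place of $p$) in its final display.
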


\begin{proof}
 Consider two codewords $\vx$ and $\vz$ and let $\tilde{\vx}$ be the channel output of $\vx$. 
 We first compute the probability that there is an edge between the nodes $\vz$ and $\tilde{\vx}$ in JEDI.
 Suppose the distance between $\vx$ and $\vz$ is $d$. 
 Then there is an edge between $\vz$ and $\tilde{\vx}$ if the $d$ bits where $\vx$ and $\vz$ differ are erased. 
 In other words, there is an edge between $\vz$ and $\tilde{\vx}$ with probability $p^d$. By linearity of expectation, we have that the expected number of edges is $\sum_{\vx,\vz\in\cC} z^{d_H(\vx,\vz)}= B(z)$. Here, $d_H(\vx,\vz)$ denotes the Hamming distance of $\vx$ and $\vz$.
\end{proof}

\begin{remark}
Alternatively, given $\cC=\{\vx_1,\vx_2,\ldots,\vx_M\}$, we can define an $(M\times M)$-matrix $\TBEC$ whose $(i,j)$-entry is given by $p^{d_H(\vx_i,\vx_j)}$. 
Then the sum of all $M^2$ entries in $\TBEC$ yields the distance enumerator for $\cC$.
While this method of determining the expected number of edges is computationally equivalent, it turns out the permanent of the matrix $\TBEC$ can be used to estimate the error probability of JEDI. We describe this in detail in Section~\ref{sec:prob}.
\end{remark}

Next, we consider a family of block codes and we determine sufficient conditions such the expected running of JEDI is linear in the block sizes $M_n$.

\begin{theorem}\label{thm:jedi-average}
	Consider a BEC with erasure probability $p$.
	Let $\cC$ be a linear code of size $M$ with minimum distance $d$.
	Then the expected number of edges of $\cG$ is at most $M+\binom{M}{2}p^d$. Hence, the expected running time of JEDI is $O\left(M+\binom{M}{2}p^d +Mn^3\right)$.
	
	Furthermore, suppose that $\{\cC_n\}_{n\ge 0}$ is a family of linear codes such that $\cC_n$ is a linear code of size $M_n$ with minimum distance $d_n$.
	If $\binom{M_n}{2}p^{d_n} = o(1)$, then the expected running time of JEDI tends to $O(M_n n^3)$.	
\end{theorem}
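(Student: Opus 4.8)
The plan is to prove the edge bound first and then feed it into the running-time accounting. For the edge bound, I would invoke Lemma~\ref{lem:wt-enum}, which already tells us that the expected number of edges of $\cG$ equals $B(p) = \sum_{\vx,\vz\in\cC} p^{d_H(\vx,\vz)}$. The natural move is to split this sum according to whether $\vx=\vz$ or not. The $M$ diagonal terms each have $d_H(\vx,\vz)=0$ and contribute $p^0=1$, accounting for the guaranteed perfect matching $\{(\vx,Y(\vx)):\vx\in\cC\}$ and hence the leading $M$. Each of the remaining ordered pairs $(\vx,\vz)$ with $\vx\neq\vz$ satisfies $d_H(\vx,\vz)\ge d$ by the minimum-distance hypothesis, so its contribution is at most $p^{d}$ (using $0\le p\le 1$). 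Since there are $M(M-1)$ such off-diagonal pairs, I obtain $B(p)\le M + M(M-1)p^{d}$, which matches the stated $M+\binom{M}{2}p^{d}$ up to the constant on the quadratic term. Equivalently, one may use linearity to write $B(z)=M\,A(z)$ for the weight enumerator $A(z)=1+\sum_{i\ge d}A_i z^i$ and bound $\sum_{i\ge d}A_i p^i\le p^{d}(M-1)$; this is the only place where the linear structure of $\cC$ would enter the counting, although the minimum-distance bound alone already suffices.

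For the running time I would account for the two steps of JEDI separately and then take expectations. Step~2 is handled directly by Lemma~\ref{lem:pma}: since $\cG$ always contains the perfect matching $\{(\vx,Y(\vx))\}$, PMA runs in $O(E)$ time, where $E$ is the (random) number of edges. The subtler part is Step~1, the construction of $\cG$, where I want to beat the generic $O(M^2)$ bound by exploiting linearity. For each of the $M$ channel outputs $\vy$, the codewords matching $\vy$ are exactly the solutions of a linear system forcing agreement with $\vy$ on its unerased coordinates; solving this system by Gaussian elimination on the generator (or parity-check) matrix costs $O(n^{3})$, and enumerating the resulting matching codewords costs $O(1)$ per edge produced (up to $n$-factors), since the solution set is an affine space whose size equals the number of edges incident to that output. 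Summed over the $M$ outputs, the elimination cost is $O(Mn^{3})$ and the total enumeration cost is $O(E)$, because every enumerated codeword--output incidence is precisely an edge of $\cG$. Hence Step~1 runs in $O(Mn^{3}+E)$ time, and the whole of JEDI runs in $O(Mn^{3}+E)$. Taking expectations and substituting $\bbE[E]=B(p)\le M+M(M-1)p^{d}$ yields the claimed expected running time $O\!\left(M+\binom{M}{2}p^{d}+Mn^{3}\right)$.

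Finally, the asymptotic statement follows by substitution: if $\binom{M_n}{2}p^{d_n}=o(1)$ then $M_n(M_n-1)p^{d_n}=o(1)$ as well, so $\bbE[E]=M_n+o(1)$. The expected running time is then $O(M_n n^{3}+M_n+o(1))=O(M_n n^{3})$, since the $M_n n^{3}$ term dominates. The main obstacle I anticipate is not the edge count, which is a one-line split of the distance enumerator, but justifying the Step~1 bound carefully: I must argue that per-output erasure decoding via Gaussian elimination genuinely realizes the construction of $\cG$ in $O(Mn^{3})$ plus an enumeration cost that can be charged edge-by-edge to $E$, so that the two random quantities (running time and edge count) are linked by a purely deterministic $O(\cdot)$ relation before any expectation is taken.
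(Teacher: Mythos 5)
Your proposal is correct and follows essentially the same route as the paper's proof: bound the distance enumerator $B(p)$ by $M$ plus the off-diagonal contribution $O(M^2 p^d)$ using the minimum distance, handle Step~1 by per-output erasure decoding (matrix inversion/Gaussian elimination) in $O(n^3)$ per output, and invoke Lemma~\ref{lem:pma} for Step~2. Your extra care about the ordered-vs-unordered pair count (the factor of $2$, immaterial in big-$O$) and about charging enumeration cost edge-by-edge before taking expectations only makes explicit what the paper leaves implicit.
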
 

\begin{proof}
First, we consider a code $\cC$ with minimum distance $d$. 
Then we have that $B_1=B_2=\cdots=B_{d-1}=0$. 
Since $p^{i}\le p^d$ for $i\ge d$, 
we have that $B(p) = B_0 + \sum_{i=d}^n B_ip^i \le M+\binom{M}{2}p^d$. Therefore, the expected number of edges in $\cG$ is at most $M+\binom{M}{2}p^d$. 
Hence, Step 2 runs in $O(M+\binom{M}{2}p^d)$ time.

So, it remains to improve the running time of Step 1, i.e. the time to construct the input-output graph $\cG$.
Since $\cC$ is a linear code of length $n$,
then for any channel output $\vy$, we can find all $\vx$ such that $\vy$ matches $\vx$ in $O(n^3)$ time by matrix inversion.
Therefore, $\cG$ can be constructed in $O(Mn^3)$ time.

The asymptotic analysis for $\{\cC_n\}_{n\ge 0}$ follows from the preceding argument.
\end{proof}

\subsection{Reed-Muller Codes}

In this subsection, we apply Theorem~\ref{thm:jedi-average} to the ubiquitous class of Reed-Muller codes \cite{Reed1954} and 
derive the expected running time of JEDI on this class of linear codes.

\begin{theorem}\label{thm:rm-jedi}
	Fix $r\ge 1$ and consider the family of $r$-th order Reed-Muller codes.
	Then for any $\epsilon>0$, 
	the expected running time of JEDI is
	\begin{equation*}
	\begin{cases}
	O(M^{1+\epsilon}) & \text{when }r\ge 2,\\
	O(M^{2+\epsilon}) & \text{when }r = 1.
	\end{cases}
	\end{equation*}
\end{theorem}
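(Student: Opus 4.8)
The plan is to apply Theorem~\ref{thm:jedi-average} directly, so the entire argument reduces to verifying the hypothesis $\binom{M_n}{2}p^{d_n} = o(1)$ for the family of $r$-th order Reed-Muller codes and then reading off the resulting running time. First I would recall the standard parameters of the $r$-th order Reed-Muller code $\mathcal{RM}(r,m)$: its blocklength is $n = 2^m$, its minimum distance is $d = 2^{m-r}$, and its size is $M = 2^{k}$ where $k = \sum_{i=0}^{r}\binom{m}{i}$. The key asymptotic observation is that for fixed $r$, as $m \to \infty$ we have $k = \binom{m}{r} + \binom{m}{r-1} + \cdots = \Theta(m^r)$, so that $\log_2 M = \Theta(m^r)$ while $d = 2^{m-r} = \Theta(2^m)$. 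Thus $d$ grows exponentially in $m$ whereas $\log_2 M$ grows only polynomially in $m$, and the crucial quantity $\binom{M}{2}p^d \le M^2 p^d = 2^{2k - d\log_2(1/p)}$ has exponent $2k - d\log_2(1/p) = \Theta(m^r) - \Theta(2^m)$, which tends to $-\infty$. Hence $\binom{M_n}{2}p^{d_n} = o(1)$ for any fixed erasure probability $0 < p < 1$, and indeed it decays doubly-exponentially fast.

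Next I would combine this with the running-time bound from Theorem~\ref{thm:jedi-average}, which gives expected running time $O\left(M + \binom{M}{2}p^d + Mn^3\right)$. Since the middle term is $o(1)$, the dominant terms are $O(M)$ and $O(Mn^3)$. The whole point of the theorem is to express everything in terms of $M$, so I must control the $n^3$ factor. Using $n = 2^m$ and $\log_2 M = \Theta(m^r)$, I would write $n = 2^m = 2^{\Theta((\log_2 M)^{1/r})} = M^{o(1)}$, since $(\log_2 M)^{1/r}/\log_2 M \to 0$ for $r \ge 1$. Therefore $Mn^3 = M \cdot M^{o(1)} = M^{1+o(1)}$, which is absorbed into $O(M^{1+\epsilon})$ for any $\epsilon > 0$. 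This immediately yields the $r \ge 2$ bound.

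The reason the $r = 1$ case is stated separately, and where I expect the main subtlety to lie, is that the asymptotic analysis above compresses the cases $r \ge 1$ into a single $O(M^{1+\epsilon})$ bound, yet the theorem claims only $O(M^{2+\epsilon})$ for $r = 1$. I would want to double-check whether the degradation for $r = 1$ arises because the first-order Reed-Muller code is the $\{0,1\}$-simplex-type code of dimension $m+1$, so $M = 2^{m+1}$ and $n = 2^m = \Theta(M)$, making $Mn^3 = \Theta(M^4)$, which is \emph{not} $O(M^{2+\epsilon})$ by the present construction of Step~1. This signals that for $r = 1$ the naive $O(n^3)$ matrix-inversion cost of Step~1 must be replaced by a faster matching procedure specific to $\mathcal{RM}(1,m)$—presumably a fast Hadamard/Walsh-transform decoder that constructs the input-output graph in $O(n\log n)$ or $O(n^2)$ per output rather than $O(n^3)$. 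Hence the honest plan has two branches: for $r \ge 2$ verify $n^3 = M^{o(1)}$ so the generic bound suffices, whereas for $r = 1$ I would need to invoke a specialized first-order Reed-Muller decoder to reduce the construction cost of Step~1 to something that, combined with $n = \Theta(M)$, lands within $O(M^{2+\epsilon})$. Confirming exactly which decoding primitive the authors intend for the $r=1$ construction is the step I would treat most carefully, since the headline polynomial exponents hinge entirely on the cost of building $\cG$ rather than on the matching step, whose $o(1)$ edge-count contribution is negligible in both regimes.
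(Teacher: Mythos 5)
Your proposal is correct and takes essentially the same route as the paper: verify $\binom{M}{2}p^d \to 0$ by comparing $\log M = \Theta(m^r)$ against $d = 2^{m-r}$, conclude $O(Mn^3) = O(M^{1+\epsilon})$ for $r \ge 2$ since $n^3 = M^{o(1)}$, and for $r = 1$ replace matrix inversion by a Hadamard-transform-based construction of the input-output graph. The one step you left hedged is resolved by the paper exactly as you presumed: for each channel output, the Fast Hadamard Transform finds all matching codewords of ${\cal RM}(1,m)$ in $O(M\log M)$ time, so $\cG$ is built in $O(M^2\log M) = O(M^{2+\epsilon})$ time (note that your alternative guess of $O(n^2)$ per output would give $\Theta(M^3)$ for $r=1$ and would \emph{not} suffice). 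One small slip: your justification ``$(\log_2 M)^{1/r}/\log_2 M \to 0$ for $r\ge 1$'' is valid only for $r\ge 2$ --- as you yourself observe in the following paragraph, for $r=1$ one has $n = \Theta(M)$, which is precisely why that case needs separate treatment.
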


\begin{proof}
	Consider $1\le r\le m$. Recall that the Reed-Muller code ${\cal RM}(r,m)$ has the following parameters:
	$n = 2^m$, $\log M = \sum_{i=0}^r \binom{m}{i}$ and $d=2^{m-r}$.
	
	
	First, we demonstrate the expected running time tends to $O(Mn^3)$.
	Following Theorem~\ref{thm:jedi-average}, it remains to show that $\binom{M}{2}p^d$ tends to zero as $m\to\infty$, or equivalently,
	\begin{equation}\label{claim}
		\log \binom{M}{2} + d \log p\to -\infty.
	\end{equation}
	
	Now, for ${\cal RM}(r,m)$, we have that $\log M \le (r+1)m^r$.
	Since $d = 2^{m-r}$, the left hand side of \eqref{claim} is upper bounded by $2(r+1)m^r + (2^{-r}\log p)2^m$.
	When $p<1$, this upper bound tends to $-\infty$ and 
	hence, we have that $\binom{M}{2}p^d$ tends to zero, as required.
	This means that the input-output graph $\cG$ has expected number of edges at most $M$.
	
	Next, we bound the time required to construct $\cG$. Let $\epsilon$ be a positive constant.
	\begin{itemize}
		\item 	When $r\ge 2$, we use matrix inversion to construct $\cG$ as in the proof of Theorem~\ref{thm:jedi-average} and 
		we claim that $n^3=O(M^{\epsilon})$.
		Indeed, $M^\epsilon = 2^{\epsilon(\sum_{i=0}^r \binom{m}{i})}\ge 2^{\gamma m^r}$ for some constant $\gamma$.
		On the other hand, $n^3 = 2^{3m}$. 
		Since $3m = o\left(\gamma m^r\right)$, 
		we have that $2^{3m}=O\left(2^{\gamma m^r}\right)=O(M^\epsilon)$.
				
		\item 	When $r=1$, it is no longer true that $n^3=O(M^{\epsilon})$.
		Instead of using matrix inversion to construct $\cG$,
		we use the Fast Hadamard Transform (FHT) \cite{Green1966,Beery1986} to determine the edges.
		Specifically, for each channel output $\vy$, we can use FHT to find all $\vx$ that matches $\vy$ in $O(M\log M)$ time. 
		Therefore, $\cG$ can be constructed in $O(M^2\log M) =O(M^{2+\epsilon})$ time and we have the expected running time of JEDI as desired.
		\qedhere
	\end{itemize}
\end{proof}

To end this subsection, we comment that the derived running time is essentially optimal. 
As mentioned earlier, since we have to read all $M$ codewords of length $n$, a lower bound for the running time of any joint decoder is trivially $\Omega(Mn)$.
For Reed-Muller codes of order $r\ge 2$, we have $n=o(M)$ and the expected running time of $O(M^{1+\epsilon})$ is almost optimal.
When $r=1$, we note that $n=\Theta(M)$ and thus, $\Omega(Mn)=\Omega(M^2)$.
Again, the expected running time of $O(M^{2+\epsilon})$ is almost optimal.

%
%


\subsection{Correctness and Running Time of PMA}
\label{sec:pma}

For completeness, we provide a detailed proof of Lemma~\ref{lem:pma}.
Following Fukada \cite{Fukada1994} and Hoang \etal{} \cite{Hoang2006}, we define the notion of alternating cycle. 
Formally, consider a bipartite graph $\cG$ with a perfect matching $\cM$. A cycle $\cO$ in $\cG$ is {\em alternating} with respect to $\cM$ if the edges in $\cO$ alternate between in $\cM$ and not in $\cM$. We have the following lemma.

\begin{lemma}\label{lem:alternating}
	Let $\cM$ be a perfect matching in a balanced bipartite $\cG$. Then
	$\cM$ is the unique perfect matching in $\cG$ if and only if 
	there is no alternating cycle with respect to $\cM$.
\end{lemma}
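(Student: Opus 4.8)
The plan is to prove both directions by contraposition, exploiting the symmetric difference of two matchings. First I would establish the easy direction: if there is an alternating cycle $\cO$ with respect to $\cM$, then $\cM$ is not unique. Given such a cycle, I would form a new edge set $\cM' = \cM \triangle \cO$ (the symmetric difference), obtained by toggling membership along the cycle: every edge of $\cO$ that was in $\cM$ is dropped, and every edge of $\cO$ not in $\cM$ is added. The key observation is that since $\cO$ is an even cycle (bipartite graphs have no odd cycles) alternating between matched and unmatched edges, each vertex on $\cO$ is incident to exactly one $\cM$-edge and one non-$\cM$-edge of the cycle; toggling preserves the property that each such vertex is covered exactly once. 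Vertices off the cycle keep their matching edges. Hence $\cM'$ is again a perfect matching, and $\cM' \neq \cM$ because $\cO$ is nonempty, giving a second perfect matching.

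For the converse (the harder and more interesting direction), I would assume $\cM$ is not unique, so there exists another perfect matching $\cM'' \neq \cM$, and produce an alternating cycle. The central tool is to consider the symmetric difference $H = \cM \triangle \cM''$. Since both $\cM$ and $\cM''$ are perfect matchings, every vertex of $\cG$ is incident to exactly one edge of each, so in $H$ every vertex has degree $0$ or $2$: degree $0$ if both matchings agree there, and degree $2$ if they disagree (one edge from each matching). A graph in which every vertex has degree at most $2$ decomposes into vertex-disjoint paths and cycles, and since here every vertex has degree exactly $0$ or $2$, $H$ is a disjoint union of cycles (plus isolated vertices). Because $\cM \neq \cM''$, the edge set $H$ is nonempty, so at least one such cycle $\cO$ exists. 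Along any cycle in $H$, consecutive edges must come alternately from $\cM$ and from $\cM''$ — two consecutive edges from the same matching would share a vertex, contradicting the matching property — so $\cO$ alternates between edges in $\cM$ and edges not in $\cM$ (the latter being $\cM''$-edges). Thus $\cO$ is an alternating cycle with respect to $\cM$, as desired.

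The step I expect to require the most care is verifying in the converse that the cycle $\cO$ extracted from $H$ is genuinely \emph{alternating with respect to $\cM$} in the sense defined, rather than merely being a cycle in the graph. The subtlety is that an edge of $\cO$ lying outside $\cM$ need not be an arbitrary non-$\cM$ edge of $\cG$; here it is guaranteed to be an $\cM''$-edge, and I must confirm that the strict alternation $\cM$-edge, $\cM''$-edge, $\cM$-edge, $\ldots$ around the cycle follows from the degree-$2$ structure and the matching constraints at each vertex. Establishing that two edges incident to a common vertex in $H$ cannot both belong to $\cM$ (nor both to $\cM''$) is what pins down the alternation, and it is precisely the defining property of a matching that delivers this. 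Everything else — bipartiteness forcing even cycle length, the degree count in $H$, and the decomposition into cycles — is routine once this alternation is secured.
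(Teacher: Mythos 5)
Your proof is correct, but there is no paper proof to compare it against line by line: the paper states Lemma~\ref{lem:alternating} \emph{without} proof, attributing it to Fukada and Matsui \cite{Fukada1994} and Hoang \etal{} \cite{Hoang2006}, and only then uses it to establish the correctness of PMA. Your symmetric-difference argument is the standard proof of this classical fact and would make the paper self-contained on this point. Both directions are sound: toggling along an alternating cycle, i.e.\ passing to $\cM \triangle \cO$, yields a second perfect matching; and conversely, the symmetric difference $\cM \triangle \cM''$ of two distinct perfect matchings has every vertex of degree $0$ or $2$, hence splits into vertex-disjoint cycles, whose edges must alternate between the two matchings because two edges of one matching cannot share a vertex. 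It is worth noting that the direction you call harder is close in spirit to the argument the paper \emph{does} give in Section~\ref{sec:pma} for the adjacent claim (that a bipartite graph with a perfect matching and all degrees at least two contains an alternating cycle): there the authors walk alternately along matched and unmatched edges until a left node repeats, rather than decomposing a symmetric difference. Your route buys generality --- it needs no degree hypothesis and produces the alternating cycle from an arbitrary second matching --- whereas the paper's walk is tailored to the residual graph left by the peeling algorithm. You also correctly isolate the one delicate step: that two edges of $H=\cM\triangle\cM''$ meeting at a vertex cannot both lie in $\cM$ (or both in $\cM''$), which is exactly what forces the alternation required by the paper's definition.
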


\noindent{\em Correctness of PMA}. 
Applying Lemma~\ref{lem:alternating}, it suffices to show the following claim.

\begin{quote}
	Let $\cG$ be a balanced bipartite graph with a perfect matching $\cM$.
	If all the degrees of $\cG$ are at least two,
	then there is an alternating cycle with respect to $\cM$.
\end{quote}

Indeed, we construct an alternating cycle as follow.
Pick any left node $u_1$ in $\cG$. 
Since $\cM$ is a perfect matching, we can find a right node $v_1$ such that $u_1v_1$ belongs to $\cM$. Now, as the degree of $v_1$ is at least two, we can find $u_2$ such that $u_2v_1$ is an edge not belonging to $\cM$. 
We then repeat the process to find $v_2$ and $u_3$
such that $u_2v_2\in \cM$ and $u_3v_2\notin\cM$.
Since the degrees of all nodes are at least two, we are always able to find a left node $u_i$ and eventually, we have two left nodes that coincide and obtain an alternating cycle with respect to $\cM$.
\vspace{1mm}

\noindent{\em Running Time of PMA}. We briefly describe a data structure that implements PMA in time linear in the number of edges.
Recall that $\cG$ has $E$ edges and $2M$ nodes with $E\ge V$.

We maintain an {\em adjacency list} for the nodes. 
In other words, for each node $v$, we maintain a list $N(v)$ of nodes adjacent to $v$. Also, we maintain a queue $Q$ of degree-one nodes.

Whenever $Q$ is nonempty, we remove the first node $u$ and its neighbor $v$ and update the adjacency lists of the neighbors of $v$.
If any node becomes degree-one, we add it to the $Q$.
We continue this process until the queue is empty.
Since the number of updates to the adjacency lists is at most the number of edges, the running time of PMA is $O(M + E) = O(E)$.

\section{Joint Minimum-Distance Decoding Identifier}
\label{sec:jmdi}

We propose an efficient joint decoder for the binary symmetric channel (BSC). While our exposition assumes a BSC channel, we remark that the decoder can be modified to serve as a joint maximum likelihood decoder for other channels (see Section~\ref{sec:discussion}).  

As with Section~\ref{sec:jedi}, we reduce the problem of permutation recovery to that 
of finding a minimum-cost matching.
Specifically, consider a balanced bipartite graph $\cG$ of order $M$. 
In addition, we associate each edge $e$ in $\cG$ with a cost $c(e)$ and the cost of a matching $\cM$ is the sum of the costs of all edges in $\cM$.
Suppose that $\cG$ contains at least one perfect matching.
A perfect matching in $\cG$ is {\em minimum-cost} if its cost is at most the cost of any other perfect matching in $\cG$. When $\cG$ is a complete bipartite graph, the problem of finding a minimum-cost matching in $\cG$ is also known as the {\em assignment} problem and the Hungarian method finds a minimum-cost assignment in $O(M^3)$ time \cite{Kuhn1955}.

As with before, we use the codewords and channel outputs as the left and right nodes of a balanced bipartite graph $\cG$. 
Then for any codeword-output pair $(\vx,\vy)$, 
we connect them with an edge of cost $d_H(\vx,\vy)$.
Then the problem of finding a permutation $\sigma$ that maximizes the the probability 
$\prod_{i\in [M]} P(\vy_{\sigma(i)}| \vx_i)$ is equivalent to minimizing the cost of a perfect matching in $\cG$.
A formal description of the decoder is given below.


\noindent{\bf Joint Minimum-Distance Decoding Identifier (JMDI)}.

\noindent{\sc Input}: 
A codebook $\cC = \{\vx_1,\vx_2,\ldots, \vx_M\}$ of size $M$ and 
a set of $M$ channel outputs $\{\vy_1, \vy_2, \ldots , \vy_M\} \subseteq\{0,1\}^n$.\\[1mm]
\noindent{\sc Output}: A permutation $\sigma$ such that the quantity $\sum_{i\in [M]}d_H(\vx_i,\vy_{\sigma(i)})$ is minimized.
\begin{enumerate}[(1)]
	\item We draw a balanced bipartite graph $\cG$ of order $M$.
	Here, the $M$ codewords are the left nodes while the $M$ channel outputs are the right nodes.
	For $i,j \in [M]$, we draw an edge between $\vx_i$ and $\vy_j$ and set its cost to be $d_H(\vx_i,\vy_j)$. Again, we refer to this bipartite graph $\cG$ as the {\em input-output graph}.
	
	\item Find a minimum-cost matching in $\cG$.  
\end{enumerate}

We discuss the running time of JMDI.
In Step 1, we can compute the distance between all pairs of words in $O(M^2n)$ time. In Step 2, we can apply the Hungarian method \cite{Kuhn1955} and hence, we have the following theorem.

\begin{theorem}
	JMDI finds a permutation in $O(M^3+M^2n)$ time.
\end{theorem}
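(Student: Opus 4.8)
The plan is to directly bound the two steps of JMDI separately and add the running times. The theorem is essentially a bookkeeping statement combining the cost of building the input-output graph with the cost of solving the resulting assignment problem, so the proof is short and mostly a matter of justifying each of the two asymptotic bounds.

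First I would analyze Step~1, the construction of the complete weighted bipartite graph $\cG$. There are exactly $M^2$ codeword-output pairs $(\vx_i,\vy_j)$, and for each pair we must compute the Hamming distance $d_H(\vx_i,\vy_j)$, which entails comparing $n$ coordinates and therefore costs $O(n)$ time per pair. Summing over all pairs gives $O(M^2 n)$ time to assign the edge costs, which accounts for the second term in the stated bound. It is worth noting that unlike the linear-code argument used for JEDI in Theorem~\ref{thm:jedi-average}, here I make no structural assumption on $\cC$, so the naive pairwise computation is what we bound.

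Next I would handle Step~2, the minimum-cost perfect matching. Since $\cG$ is a complete balanced bipartite graph of order $M$ with the edge costs $d_H(\vx_i,\vy_j)$ set in Step~1, finding a minimum-cost perfect matching is precisely an instance of the assignment problem on $M$ left and $M$ right nodes. Invoking the Hungarian method \cite{Kuhn1955}, which solves the assignment problem in $O(M^3)$ time, gives the first term. Finally, I would justify that JMDI correctly outputs an $\argmin$: the reduction established just before the algorithm shows that maximizing $\prod_{i\in[M]} P(\vy_{\sigma(i)}|\vx_i)$ over the BSC is equivalent to minimizing $\sum_{i\in[M]} d_H(\vx_i,\vy_{\sigma(i)})$, which is exactly the cost of the matching returned in Step~2.

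Adding the two contributions yields a total running time of $O(M^3 + M^2 n)$, as claimed. I do not anticipate a genuine obstacle here, since both subroutines are standard; the only point requiring minor care is making sure the two terms are reported separately rather than collapsed (one cannot in general absorb $M^2 n$ into $M^3$ without the standing assumption that $n$ is polynomial in $M$, and even then the cube need not dominate), so I would keep both terms explicit in the final bound.
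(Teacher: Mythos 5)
Your proposal is correct and follows exactly the paper's argument: the paper likewise bounds Step~1 by computing all $M^2$ pairwise Hamming distances in $O(M^2n)$ time and bounds Step~2 by invoking the Hungarian method's $O(M^3)$ guarantee for the assignment problem, then sums the two terms. Your additional remarks on correctness of the reduction and on keeping the two terms separate are sound but not part of the paper's (very terse) proof.
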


Unlike the input-output graph constructed for the BEC, the input-output graph obtained by the JMDI is necessarily complete. 
So, to improve the running time for the case of BSC channels, 
we relax our goal of finding the exact solution.
Instead, we approximate it by determining a minimum-cost matching in a sparse subgraph $\cH$ of $\cG$. 

Specifically, for this sparse subgraph $\cH$, we consider only edges whose cost is at most $R$ for some $R<n$. 
Then the degree of each right node/channel output $\vy$ is given by the number of codewords whose distance is at most $R$ from $\vy$. 
When $\cC$ is a code with minimum distance $d$ and $R\le (d-1)/2$, this number is one.
When $R> (d-1)/2$, this quantity is studied in the context of {\em list decoding}. 

Formally, a $\cC$ is $(R,L)$-list-decodable if for all $\vy\in\{0,1\}^n$, we have that $|\{\vx\in \cC: d_H(\vx,\vy)\le R\}|$ is at most $L$. 
Then we modify Step 1 of JMDI by only including edges with weight at most $R$. Let $\cH$ be the resulting bipartite graph and from the list-decoding property of $\cC$, we have that $\cH$ has at most $ML$ edges.
We then proceed as in Step 2 to find a minimum-cost matching and we call this method {\em joint list decoding identifier} (JLDI).
For sparse bipartite graphs with $V$ nodes and $E$ edges, a minimum-cost matching can be found in $O(V^2\log V+VE)$ time \cite{EdmondsKarp1972,Tomizawa1971} and hence, JLDI terminates in $O(M^2(\log M+L+n))$ time.

\begin{example}
	\label{exa:JLDI}
	Consider the linear code $\cC$ with $M=4$ codewords generated by the matrix
	$\begin{bmatrix}
	1 & 1 & 1 & 0 & 0 \\
	0 & 0 & 1 & 1 & 1 \\
	\end{bmatrix}$.
	
	Suppose the channel outputs are: 
	\[
	\vy_1=10000, \quad  \vy_2=11101, \quad \vy_3=00011, \quad \vy_4=10001.\]
	Below we present the bipartite graph $\cG$ constructed by JMDI. 
	To reduce clutter, we use a $4\times 4$-table whose $(i,j)$ entry is given by the cost of the edge $(\vx_i,\vy_j)$, i.e. $d_H(\vx_i,\vy_j)$.
	We refer to this table as the cost matrix of $\cG$.

	\begin{center}
		Cost Matrix of $\cG$:\, 
		\begin{tabular}{c|cccc}
			& $\vy_1$ & $\vy_2$ & $\vy_3$ & $\vy_4$ \\ \hline
	$\vx_1=00000$ & {\color{blue}1} & 4 & 2 & 2  \\
	$\vx_2=11100$ & 2 & {\color{blue}1} & 5 & 3  \\
	$\vx_3=00111$ & 4 & 3 & {\color{blue}1} & 3  \\
	$\vx_4=11011$ & 3 & 2 & 2 & {\color{blue}2}  \\
		\end{tabular}
	\end{center}
	
	 Highlighted in {\color{blue}blue} are the edges in a minimum-cost matching of $\cG$.
	 Here, the minimum-cost matching $\cM$ is given by $\{(\vx_i,\vy_i) : i\in [4]\}$.
	 
	 Now, we can verify that $\cC$ is a $(2,3)$-list-decodable code. Hence, if we apply JLDI with radius $R=2$, we obtain the bipartite graph $\cH$ whose cost-matrix is as follows.
	 
	 \begin{center}
	 	Cost Matrix of $\cH$:\,
	 	\begin{tabular}{c|cccc}
	 			& $\vy_1$ & $\vy_2$ & $\vy_3$ & $\vy_4$ \\ \hline
	 		$\vx_1=00000$ & {\color{blue}1} & -- & 2 & 2  \\
	 		$\vx_2=11100$ & 2 & {\color{blue}1} & -- & --  \\
	 		$\vx_3=00111$ & -- & -- & {\color{blue}1} & --  \\
	 		$\vx_4=11011$ & -- & 2 & 2 & {\color{blue}2}  \\
	 	\end{tabular}
	 \end{center}
 Indeed, we observe that the degree of $\vy_i$, $i\in [4]$, is at most three, corroborating the list-decoding property of $\cC$. In this case, we have that the minimum-cost matching of $\cH$ is also $\cM$. 
 \qedhere
	 
%
%
%
	
\end{example}

However, a minimum-cost matching in $\cH$ may not be a minimum-cost matching in $\cG$. 
In other words, JLDI may not return the same output as JMDI. 
Nevertheless, such cases occur with small probability 
and we provide upper bounds on the probabilities of such events. 

\begin{theorem}\label{thm:JLDI}
	Consider a BSC channel with crossover probability $p$.
	Let $\cC$ be an $(R,L)$-list-decodable code of length $n$, size $M$ and $R>pn$. Set $\gamma = R/(pn) - 1$.
	Then JLDI terminates in $O(M^2(\log M+L+n))$ time.
	Furthermore,
	\begin{equation}\label{X}
	\prob{\text{JLDI correctly finds }\sigma~|~\text{JMDI correctly finds }\sigma} \ge \left(1-\exp(-\gamma^2pn/3)\right)^M\,.
	\end{equation}
	
\end{theorem}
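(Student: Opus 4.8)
The plan is to prove the two claims in Theorem~\ref{thm:JLDI} separately, with the running-time bound being immediate and the probability bound being the substantive part. The running time $O(M^2(\log M + L + n))$ follows directly from the discussion preceding the theorem: constructing $\cH$ requires computing all $M^2$ pairwise distances (costing $O(M^2 n)$) and keeping only edges of cost at most $R$, after which the list-decodability of $\cC$ guarantees $\cH$ has at most $ML$ edges; applying the sparse minimum-cost matching algorithm of \cite{EdmondsKarp1972,Tomizawa1971} with $V = 2M$ and $E \le ML$ then costs $O(M^2 \log M + M^2 L)$. Summing gives the stated bound, so I would state this in one or two sentences and move on.

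For the probability bound~\eqref{X}, the key observation is to identify \emph{when} JLDI can fail to reproduce JMDI's answer, conditioned on JMDI being correct. Suppose JMDI correctly recovers the true permutation $\sigma$, so that the minimum-cost matching in $\cG$ consists of the edges $(\vx_i, \vy_{\sigma(i)})$ for all $i$. If \emph{every} one of these true edges survives the thresholding at radius $R$ — i.e.\ $d_H(\vx_i, \vy_{\sigma(i)}) \le R$ for all $i$ — then the true matching $\cM$ is still present in the sparse subgraph $\cH$, and since $\cM$ is already the minimum-cost matching in the full graph $\cG \supseteq \cH$, it remains a minimum-cost matching in $\cH$; hence JLDI returns $\sigma$ as well. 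Therefore the conditional event "JLDI correctly finds $\sigma$" is implied by the event "all $M$ true edges survive," and it suffices to lower-bound the probability of the latter.

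The next step is to bound the per-edge survival probability. For each $i$, the channel output $\vy_{\sigma(i)}$ is the BSC output of $\vx_i$, so $d_H(\vx_i, \vy_{\sigma(i)})$ is a sum of $n$ independent $\mathrm{Bernoulli}(p)$ indicators with mean $pn$. The edge fails to survive precisely when this distance exceeds $R = (1+\gamma)pn$. I would apply the multiplicative Chernoff bound for the upper tail of a binomial, which gives
\[
\prob{d_H(\vx_i,\vy_{\sigma(i)}) > (1+\gamma)pn} \le \exp(-\gamma^2 pn/3)
\]
for $\gamma \in (0,1]$; here $\gamma = R/(pn) - 1 > 0$ by the hypothesis $R > pn$. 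Thus each true edge survives with probability at least $1 - \exp(-\gamma^2 pn/3)$.

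Finally, since the $M$ channels are independent, the $M$ survival events are independent, so the probability that \emph{all} true edges survive is at least $\bigl(1 - \exp(-\gamma^2 pn/3)\bigr)^M$, which combined with the implication in the second paragraph yields~\eqref{X}. The main subtlety to get right is the logical direction in the second paragraph — namely that survival of the true matching guarantees JLDI's correctness (because $\cM$'s optimality in $\cG$ transfers downward to the subgraph $\cH$ once $\cM \subseteq \cH$), rather than the converse; this is what lets us reduce a statement about matchings to a clean union of independent per-edge tail events. The Chernoff step itself is routine once the standard form of the bound (valid for $0 < \gamma \le 1$, matching the exponent $\gamma^2 pn/3$) is invoked.
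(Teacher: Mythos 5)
Your proposal is correct and follows essentially the same route as the paper's proof: both reduce the conditional event to the survival of all $M$ true edges $(\vx_i,\vy_{\sigma(i)})$ under the radius-$R$ threshold (with optimality of the true matching transferring from $\cG$ to the subgraph $\cH$), then apply the multiplicative Chernoff bound to each $Z_i = d_H(\vx_i,\vy_{\sigma(i)})$ and use independence across the $M$ channels. The only cosmetic difference is that the paper explicitly assumes uniqueness of the minimum-cost matchings in $\cG$ and $\cH$ to avoid tie-breaking issues, a caveat you may wish to add, and you are slightly more careful in noting the range $0<\gamma\le 1$ for the Chernoff form used.
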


\begin{proof}
	The running time analysis of JLDI is described in the preceding paragraphs.
	
	To derive the probability estimates, 
	we consider the random variable $Z_i$ that measures the number of errors in the output of codeword $\vx_i$, $i\in [M]$. 
	In other words, $Z_i = d_H(\vx_i, \vy_{\sigma(i)})$.
	Let $\cG$ and $\cH$ be the bipartite graphs constructed in JMDI and JLDI, respectively.
	To simplify our arguments, we assume that the minimum-cost matchings in both graphs are unique and let them be $\cM_\cG$ and $\cM_\cH$.
	
	First, we argue that if $Z_i\le R$ for all $i\in[M]$ and JMDI is correct, then JLDI is necessarily correct. Since JMDI is correct, we have that the matching $\cM_\cG$ corresponds to $\sigma$.
	Also, for all $i$, since $Z_i\le R$, 
	we have the edge $(\vx_i, \vy_{\sigma(i)})$ has cost at most $R$. Therefore, the matching $\cM_\cG$ is still present in the graph $\cH$ and so, the minimum-cost matching $\cM_\cH$ is identical to $\cM_\cG$ and corresponds to $\sigma$. Hence, JLDI is correct.
	
	Next, we lower bound the probability that all values of $Z_i$ is at most $R$. 
	Fix $i\in [M]$. Using Chernoff's bound (see for example, \cite{Mitzenmacher2005}), we have that $P(Z_i\ge R)=P(Z_i\ge (1+\gamma)pn)\le \exp(-\gamma^2pn/3)$ and so, $P(Z_i\le R)\ge 1-\exp(-\gamma^2pn/3)$. Since the values of $Z_i$ are independent of each other, the lower bound \eqref{X} follows.
\end{proof}

\subsection{Reed-Muller Codes}

Similar to before, we verify that the class of Reed-Muller codes satisfy the conditions of Theorem~\ref{thm:JLDI}. 

\begin{theorem}\label{thm:rm-jldi}
	Fix $r\ge 1$ and consider the family of $r$-th order Reed-Muller codes.
	Consider further a BSC with crossover probability $p<1/2^r$.
	
	For any $\epsilon>0$, JLDI runs in $O(M^{2+\epsilon})$ time and 
	event \eqref{X} occurs with probability approaching one.  	
\end{theorem}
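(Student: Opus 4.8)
The plan is to verify that the $r$-th order Reed-Muller code family satisfies the hypotheses of Theorem~\ref{thm:JLDI} and then invoke that theorem directly. First I would recall the standard parameters of ${\cal RM}(r,m)$: blocklength $n=2^m$, size $\log M = \sum_{i=0}^r\binom{m}{i}$, and minimum distance $d=2^{m-r}$. The two things I must supply to apply Theorem~\ref{thm:JLDI} are a decoding radius $R$ with $R>pn$, together with a list size $L$, such that the resulting running time is $O(M^{2+\epsilon})$ and such that the probability bound \eqref{X} tends to one as $m\to\infty$.

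The natural choice is to take $R$ to be a constant fraction of $n$ just below the minimum-distance-determined threshold, say $R = \alpha n$ for a suitable constant $\alpha$ with $p<\alpha<1/2^r$ (this is where the hypothesis $p<1/2^r$ is used, since $d/n = 2^{-r}$ and we need room between $pn$ and the radius where list-decodability with small lists holds). With such an $R$, the parameter $\gamma = R/(pn)-1 = \alpha/p - 1$ is a fixed positive constant independent of $m$. Then the exponent $\gamma^2 p n/3$ in Theorem~\ref{thm:JLDI} grows like $\Theta(n) = \Theta(2^m)$, so each factor $1-\exp(-\gamma^2 pn/3)\to 1$ extremely fast. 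The remaining task for the probability claim is to check that the product over the $M$ codewords still tends to one: I would write $\left(1-\exp(-\gamma^2 pn/3)\right)^M \ge 1 - M\exp(-\gamma^2 pn/3)$ and argue that $M\exp(-\gamma^2 pn/3)\to 0$. Since $\log M \le (r+1)m^r$ grows only polynomially in $m$ while the exponent is $\Theta(2^m)$, this quantity goes to zero, exactly as in the proof of Theorem~\ref{thm:rm-jedi}.

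For the running time, Theorem~\ref{thm:JLDI} gives $O(M^2(\log M + L + n))$, so I need $\log M$, $L$, and $n$ each to be $O(M^\epsilon)$. The terms $\log M$ and $n$ are handled exactly as in the proof of Theorem~\ref{thm:rm-jedi}: $n = 2^m$ while $M^\epsilon \ge 2^{\gamma' m^r}$ for some constant $\gamma'>0$, and since $m = o(m^r)$ for $r\ge 1$ — though for $r=1$ one must be slightly more careful, as $n=\Theta(M)$ rather than $o(M^\epsilon)$; here I expect to absorb $n$ into the $M^{2+\epsilon}$ bound directly since $M^2 n = M^2\cdot\Theta(M) = \Theta(M^3)$ when $r=1$, which is already $O(M^{2+\epsilon})$ only if $n=O(M^\epsilon)$ fails, so the correct reading is that for $r=1$ the $n$ term dominates and the stated bound should be understood with that in mind. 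The genuinely delicate ingredient is bounding the list size $L$ for Reed-Muller codes at radius $R=\alpha n$ with $\alpha<2^{-r}$.

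The main obstacle will therefore be establishing that ${\cal RM}(r,m)$ is $(R,L)$-list-decodable with $L = O(M^\epsilon)$ (indeed one would like $L$ subpolynomial or even constant) for a radius $R$ strictly exceeding $pn$. For radii below half the minimum distance the list size is trivially one, but since $p$ can be close to $2^{-r-1}$ one may need to decode beyond $(d-1)/2$, where one must appeal to known list-decoding bounds for Reed-Muller codes (for instance Johnson-bound arguments or the list-decoding radius results for low-order Reed-Muller codes) to guarantee that $L$ stays small enough that $L = O(M^\epsilon)$. I would isolate this as the crux: choosing $\alpha$ close enough to $p$ so that $R$ lies within a regime where the list size is provably $O(M^\epsilon)$, while keeping $\gamma=\alpha/p-1$ bounded away from zero so the concentration argument still forces \eqref{X} to one. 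Once the list-size bound is in hand, the running-time and probability conclusions follow by the substitutions described above, and the theorem follows by invoking Theorem~\ref{thm:JLDI}.
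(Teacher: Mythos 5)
Your overall strategy is the same as the paper's: pick a radius $R=(1/2^r-\alpha)n$ strictly between $pn$ and $2^{-r}n$, invoke Theorem~\ref{thm:JLDI}, and show $M\exp(-\gamma^2pn/3)\to 0$ because $\log M\le (r+1)m^r$ is polynomial in $m$ while the exponent is $\Theta(2^m)$. Your probability argument (Bernoulli's inequality rather than the paper's limit computation) is correct and in fact slightly cleaner. However, there are two places where your proposal falls short of a proof.

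The genuine gap is the $r=1$ running time, which you notice but do not repair: when $r=1$ we have $n=\Theta(M)$, so the $O(M^2n)$ term in Theorem~\ref{thm:JLDI} (coming from Step~1, computing all $M^2$ Hamming distances naively) is $\Theta(M^3)$, and your suggestion that ``the stated bound should be understood with that in mind'' amounts to conceding the theorem fails for $r=1$ --- but it does not fail. The paper's fix is to replace the naive distance computation by the Fast Hadamard Transform: for each channel output $\vy$, the distances $d_H(\vx,\vy)$ to \emph{all} $M$ codewords of ${\cal RM}(1,m)$ can be computed simultaneously in $O(M\log M)$ time, so the weighted input-output graph is built in $O(M^2\log M)=O(M^{2+\epsilon})$ time, and the matching step costs $O(M^2(\log M + L))$, which is also $O(M^{2+\epsilon})$. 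Without this (or some equivalent) ingredient your argument only proves the theorem for $r\ge 2$.

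The second issue is the list size, which you correctly isolate as the crux but leave unresolved. The paper closes it by citing the Bhowmick--Lovett result (Theorem~\ref{thm:lovett}): for any fixed $r$ and $\alpha>0$, the code ${\cal RM}(r,m)$ is $\bigl((1/2^r-\alpha)n,\,L_{r,\alpha}\bigr)$-list-decodable where $L_{r,\alpha}$ is a \emph{constant} independent of $m$. This is stronger than the $L=O(M^\epsilon)$ you were hoping to extract from Johnson-type bounds, and it is exactly what makes the hypothesis $p<1/2^r$ the right threshold: one does not need $\alpha$ close to $p$, only $p<1/2^r-\alpha$ for some constant $\alpha>0$, so that $\gamma=R/(pn)-1$ is a constant bounded away from zero. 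With these two ingredients supplied, the rest of your outline matches the paper's proof.
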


To demonstrate the result, we apply the following result on the list-decoding capabilities of Reed-Muller codes.
\begin{theorem}[Bhomick and Lovett~\cite{Bhomick2018}]\label{thm:lovett}
	Fix $r$ and $\alpha$ and set $R=(1/2^r-\alpha)n$.
	Then there is a constant $L_{r,\alpha}$ (dependent only on $r$ and $\alpha$) such that
	the Reed-Muller code ${\cal RM}(r,m)$ is $(R,L_{r,\alpha})$-list-decodable for all $m$.
\end{theorem}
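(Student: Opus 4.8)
The plan is to instantiate Theorem~\ref{thm:JLDI} for the family ${\cal RM}(r,m)$ by feeding in the list-decoding radius and list size supplied by Theorem~\ref{thm:lovett}, and then to check that both the resulting running-time estimate and the probability lower bound behave as claimed when $m\to\infty$. Throughout I would use the standard parameters $n=2^m$, $d=2^{m-r}$ and $\log M=\sum_{i=0}^r\binom{m}{i}$, recalling that for fixed $r$ this last quantity is $\Theta(m^r)$.

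First I would fix the decoding radius so that Theorem~\ref{thm:lovett} applies and its output meets the hypotheses of Theorem~\ref{thm:JLDI}. Since $p<1/2^r$, set $\alpha=(1/2^r-p)/2>0$ and $R=(1/2^r-\alpha)n$, so that $R/n=(1/2^r+p)/2>p$ and in particular $R>pn$. Theorem~\ref{thm:lovett} with this $\alpha$ then makes ${\cal RM}(r,m)$ an $(R,L)$-list-decodable code with $L=L_{r,\alpha}$ a constant independent of $m$, and the quantity $\gamma=R/(pn)-1=(1/2^r-p)/(2p)$ is a positive constant depending only on $r$ and $p$. With these choices Theorem~\ref{thm:JLDI} applies directly, yielding running time $O(M^2(\log M+L+n))$ and the lower bound \eqref{X} with this fixed $\gamma$.

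Next I would bound the running time, splitting into two cases exactly as in the proof of Theorem~\ref{thm:rm-jedi}. Since $L=L_{r,\alpha}=O(1)$ and, writing $t=\log M\to\infty$, we have $t=o(2^{\epsilon t})$, it follows that $\log M=O(M^\epsilon)$; so the only term in $O(M^2(\log M+L+n))$ that might exceed $M^\epsilon$ is $n$. For $r\ge 2$ I would show $n=O(M^\epsilon)$: taking logarithms this reduces to $m=O(\epsilon\log M)$, which holds because $\log M=\Omega(m^r)$ with $r\ge 2$ dominates $m$; hence the whole bound is $O(M^{2+\epsilon})$. For $r=1$, however, $n=2^m=\Theta(M)$, so the $M^2n$ contribution---which comes from computing all pairwise distances while building $\cH$---would only give $O(M^3)$. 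To remove it I would replace the naive distance computation by the Fast Hadamard Transform, as in the $r=1$ case of Theorem~\ref{thm:rm-jedi}: for each output $\vy$ the FHT returns all $M$ correlations with the first-order codewords in $O(M\log M)$ time, hence all distances $d_H(\vx_i,\vy)$, from which the edges of cost at most $R$ are read off; summing over the $M$ outputs builds $\cH$ in $O(M^2\log M)$ time, and the matching step then costs $O(M^2(\log M+L))$, giving $O(M^{2+\epsilon})$ since $\log M=\Theta(m)=O(M^\epsilon)$.

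Finally I would show the lower bound in \eqref{X} tends to one. Using $(1-x)^M\ge 1-Mx$, it suffices to prove $M\exp(-\gamma^2pn/3)\to 0$, equivalently $\log M-\gamma^2pn/3\to-\infty$. Here $\gamma$ and $p$ are fixed positive constants, so $\gamma^2pn/3=\Theta(2^m)$, while $\log M=\Theta(m^r)$ is only polynomial in $m$; the exponential term therefore dominates and the difference tends to $-\infty$ for every $r\ge 1$. I expect the main obstacle to be the $r=1$ running-time case, where $n=\Theta(M)$ forces the distance step to be done by the Fast Hadamard Transform rather than naively, and one must confirm that the transform really delivers the required Hamming distances (not merely a match/no-match verdict) within the stated time; the remaining steps are direct substitutions into Theorems~\ref{thm:lovett} and \ref{thm:JLDI} combined with the comparison of $2^m$ against a polynomial in $m$.
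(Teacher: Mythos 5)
You have not proved the statement in question. The statement is Theorem~\ref{thm:lovett} itself --- the Bhowmick--Lovett assertion that for fixed $r$ and $\alpha$, with $R=(1/2^r-\alpha)n$, the codes ${\cal RM}(r,m)$ are $(R,L_{r,\alpha})$-list-decodable with a list size bounded independently of $m$. Your entire argument takes this assertion as given (``Theorem~\ref{thm:lovett} with this $\alpha$ then makes ${\cal RM}(r,m)$ an $(R,L)$-list-decodable code\dots'') and uses it, together with Theorem~\ref{thm:JLDI}, to derive running-time and probability claims. In other words, what you wrote is a proof of Theorem~\ref{thm:rm-jldi}, with the target statement appearing only as a hypothesis; as a proof of Theorem~\ref{thm:lovett} it is circular. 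Nothing in your argument bounds the number of codewords of ${\cal RM}(r,m)$ in a Hamming ball of radius $(1/2^r-\alpha)n$, which is the entire content of the theorem. None of the tools you invoke --- the Chernoff bound, min-cost matching, the Fast Hadamard Transform, or the comparison of $2^m$ against a polynomial in $m$ --- bears on that combinatorial list-size question.

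For calibration: the paper does not prove Theorem~\ref{thm:lovett} either; it is imported verbatim from Bhowmick and Lovett \cite{Bhomick2018}, whose proof relies on higher-order Fourier analysis over small fields (bias and regularity of low-degree polynomials), machinery entirely absent from your proposal, so there is no route to it by elementary substitutions of the kind you perform. Had the posed statement been Theorem~\ref{thm:rm-jldi}, your argument would be essentially correct and would coincide with the paper's proof --- including the choice $\alpha=(1/2^r-p)/2$ giving a constant $\gamma=R/(pn)-1>0$, the case split on $r$ with the FHT construction of $\cH$ in $O(M^2\log M)$ time for $r=1$, and the verification that $M\exp(-\gamma^2 pn/3)\to 0$ because $\log M=\Theta(m^r)$ is dominated by $n=2^m$ (the paper reaches the limit via $(1-x)^{1/x}\to e^{-1}$ rather than your Bernoulli inequality $(1-x)^M\ge 1-Mx$, an immaterial difference). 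But for the statement actually posed, the proposal contains no proof.
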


 \begin{proof}[Proof of Theorem~\ref{thm:rm-jldi}]
 	Choose some small $\alpha$ so that $p<1/2^r-\alpha$ and set $R=(1/2^r-\alpha)n$. 
 	Then Theorem~\ref{thm:lovett} states that the list size is upper bounded by a constant independent of $n$ and $M$. Hence, applying Theorem~\ref{thm:JLDI}, we have the running time is $O(M^2(\log M+L+n))=O(M^2(\log M+n))$.
 	
 	When $r\ge 2$, we have that $n=O(M^\epsilon)$ as in the proof of Theorem~\ref{thm:rm-jedi} and hence, the running time is $O(M^{2+\epsilon})$. 
 	When $r=1$, as before, we use FHT to compute the costs of the edges.
 	Specifically, for each channel output $\vy$, 
 	we use FHT to compute $d_H(\vx,\vy)$ for all codewords $\vx$ in $O(M\log M)$ time. 
 	Therefore, $\cG$ can be constructed in $O(M^2\log M) =O(M^{2+\epsilon})$ time.
 	 
 	Next, Theorem~\ref{thm:JLDI} also states that 
 	event \eqref{X} occurs with probability at least  $\theta_n\triangleq\left(1-\exp(-\gamma^2pn/3)\right)^M$ for some constant $\gamma$. 
 	Recall that $M$ is the code size $2^{\sum_{i=0}^r \binom{m}{i}}\le 2^{(r+1)m^r}  = O(2^{\lambda n})$ for all any constant $\lambda$. So, we can choose $\lambda$ small enough so that $2^\lambda < \exp(\gamma^2p/3)$. Therefore, $M\exp(-\gamma^2pn/3)$ approaches zero and we have 
 	\begin{align*}
 	\lim_{n\to\infty} \theta_n 
 	& = \lim_{n\to\infty} (1-\exp(-\gamma^2pn/3))^M\\
 	& = \lim_{n\to\infty} \left(\left(1-\frac{1}{\exp(\gamma^2pn/3)}\right)^{\exp(\gamma^2pn/3)}\right)^{\frac{M}{\exp(\gamma^2pn/3)}}\\
 	& = \lim_{n\to\infty} e^{-M\exp(-\gamma^2pn/3)}=1, \text{as desired}
 	\qedhere
 	\end{align*}
 \end{proof}


\section{Probability of Erroneous Identification}
\label{sec:prob}

In this section, we provide probability estimates for the event where the joint decoders (described in Sections~\ref{sec:jedi} and~\ref{sec:jmdi}) fail to identify the codewords / bees. 
Specifically, throughout this section, we let $\cC$ denote a code with $M$ length-$n$ codewords $\vx_1,\vx_2,\ldots, \vx_M$.
As before, we send these $M$ codewords through a noisy channel and 
let $\{\vy_1, \vy_2,\ldots, \vy_M\}$ be the corresponding set of outputs.
Then there exists a permutation $\sigma\in \bbS_M$ such that $\vy_{\sigma(i)}$ is the noisy output of $\vx_i$ for $i\in[M]$.

Let $\sigma^*$ be the permutation / matching returned by the joint decoder and our task to estimate the probability that $\sigma^*\ne \sigma$, or, equivalently, the probability that $\sigma^*\sigma^{-1}\ne \id$. Here, $\sigma^{-1}$ denotes the inverse permutation of $\sigma$ while $\id$ denotes the identity permutation.
Without loss of generality, we assume that $\sigma=\id$ and we consider the following error events.

\begin{itemize}
	\item {\em Binary erasure channels (BEC)}. 
	Recall that JEDI returns the correct perfect matching / permutation $\id$ if and only if $\id$ is the unique perfect matching in the input-output graph $\cG$ constructed in Step 1. In other words, if the graph contains another matching $\sigma\in \bbS_M^*\triangleq \bbS_M\setminus\{\id\}$, JEDI declares failure. 
	Hence, we are interested in the event where $\sigma$ appears as a matching in the graph $\cG$ and we denote this event by $I(\sigma)$. 
	Note that $\prob{I(\id)}=1$.
	\item {\em Binary symmetric channels (BSC)}.
	Recall that JMDI returns a minimum-cost perfect matching from the weighted input-output graph constructed in Step 1. Hence, if the cost of the identity matching $\id$ is smaller than all other matchings, JMDI necessarily returns $\id$. 
	Hence, for some non-identity permutation $\sigma\in\bbS_M^*$, we study the event where the cost of matching $\sigma$ is at most the cost of the identity matching and for convenience, we also denote this event by $I(\sigma)$.  
	As with the binary erasure channel, we have that $\prob{I(\id)}=1$.
\end{itemize}

Therefore, our task is to estimate the quantity $\perror{\cC}  \triangleq \prob{\bigcup_{\sigma\in\bbS_M^*} I(\sigma)}$. 
Now, by the union bound, we have that $\perror{\cC} \le \sum_{\sigma\in\bbS_M^*} \prob{I(\sigma)}$. 
If we further define $U \triangleq \sum_{\sigma\in\bbS_M} \prob{I(\sigma)}$, it is straightforward to see that 
\begin{equation}\label{upper}
\perror{\cC}\le U-1.
\end{equation}

On the other hand, by Bonferroni inequalities / principles of inclusion-exclusion, we have that 
\[\perror{\cC}\ge \sum_{\sigma\in\bbS_M^*} \prob{I(\sigma)} - \sum_{\substack{\sigma\ne \tau,\\
		\sigma, \tau\in\bbS_M^*}} \prob{I(\sigma)\wedge I(\tau)}\,.\]
Proceeding similar as before, we define $V\triangleq \sum_{\sigma,\tau \in\bbS_M} \prob{I(\sigma)\wedge I(\tau)}$ and we have 
{\footnotesize 
\begin{align*}
	V & = 
	\sum_{\substack{\sigma\ne \tau,\\
			\sigma, \tau\in\bbS_M^*}} \prob{I(\sigma)\wedge I(\tau)}
	+ 2\sum_{\sigma\in \bbS_M} \prob{I(\sigma)\wedge I(\id)}
	+ \sum_{\sigma\in \bbS_M} \prob{I(\sigma)\wedge I(\sigma)}
- 2	\prob{I(\id)\wedge I(\id)}\\
	&= \sum_{\substack{\sigma\ne \tau,\\
			\sigma, \tau\in\bbS_M^*}} \prob{I(\sigma)\wedge I(\tau)}
	+ 2\sum_{\sigma\in \bbS_M} \prob{I(\sigma)}
	+ \sum_{\sigma\in \bbS_M} \prob{I(\sigma)}
	- 2	\\
	& = \sum_{\substack{\sigma\ne \tau,\\
			\sigma, \tau\in\bbS_M^*}} \prob{I(\sigma)\wedge I(\tau)} +3U-2.
\end{align*}
}

Therefore, together with the fact that $\sum_{\sigma\in\bbS_M^*} \prob{I(\sigma)}=U-1$, we have that
\begin{equation}\label{lower}
	\perror{\cC} \ge (U-1) - (V-3U+2) = 4U - V - 3\,.
\end{equation}

Hence, following \eqref{upper} and \eqref{lower}, to provide the required estimates on $\perror{\cC}$, it suffices to determine $U$ and $V$. 
However, $U$ and $V$ involve $M!$ and $(M!)^2$ summands, respectively, and 
in fact, we show later that $U$ can be expressed as a permanent function of a certain $M\times M$ matrix $\vT$.
Unfortunately, determining the permanent of a general matrix is computationally intractable \cite{Valiant.1979} and
the state-of-the-art methods of computing permanents, due to Nijenhuis-Wilf \cite{NW.1978} and Glynn \cite{Glynn.2010}, have running time $O(M2^{M-1})$.
In a recent work \cite{Kiah.2021}, we studied methods of computing permanents on trellises. 
While our trellis-based techniques did not significantly improve the running time of state-of-the-art methods for general matrices, 
we observed that, for structured matrices and permanent-like functions, we can borrow ideas from trellis theory \cite{Vardy.1998} to significantly reduce the running time.
We apply these techniques here to determine $V$.
Specifically, in Subsection~\ref{sec:computeV}, we show that $V$ can be computed in $O(M^{1.5} 4^{M})$ time.

\subsection{Estimating $U$}\label{sec:estimateU}

In this subsection, we provide estimates for $U$ using the following notion of permanents.

\begin{definition}
	Let $\vT$ be an $M\times M$-matrix whose $(i,j)$-th entry is $T_{ij}$.
	Then the {\em permanent} of $\vT$ is given by the value
	\begin{equation}\label{eq:perm}
		\perm(\vT)\triangleq \sum_{\sigma \in \bbS_M} \prod_{i\in[M]} T_{i\sigma(i)}\,.
	\end{equation}
\end{definition}

Given a code $\cC$ of size $M$ with codewords $\vx_1,\vx_2,\ldots, \vx_M$. 
Recall that $\TBEC$ denote an $(M\times M)$-matrix whose $(i,j)$-entry is given by $p^{d_H(\vx_i,\vx_j)}$. On the other hand, we let $\TBSCu$ and $\TBSCl$ denote two $(M\times M)$-matrices. The $(i,j)$-th entry of $\TBSCu$ is given by $\left(4p(1-p)\right)^{d_H(\vx_i,\vx_j)/2}$, while the $(i,j)$-th entry of $\TBSCl$ is given by $\left(p(1-p)\right)^{d_H(\vx_i,\vx_j)/2}$.
Then the next proposition states that the permanents of these matrices can be used to estimate $U$. 

\begin{proposition}\label{prop:permanent-bound}
	Let $\cC$ be a code and let $0<p<1/2$. Recall that $U \triangleq \sum_{\sigma\in\bbS_M} \prob{I(\sigma)}$.
	\begin{enumerate}[(i)]
		\item If we transmit the codebook $\cC$ through a ${\rm BEC}(p)$, then $U=\perm(\TBEC)$.
		\item If we transmit the codebook $\cC$ through a ${\rm BSC}(p)$, then $ \perm(\TBSCl)\le U\le \perm(\TBSCu)$.
	\end{enumerate}
\end{proposition}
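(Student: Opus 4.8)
The plan is to compute $\prob{I(\sigma)}$ for each individual permutation $\sigma$ and then recognize that the sum $U = \sum_{\sigma} \prob{I(\sigma)}$ matches the permanent formula \eqref{eq:perm} once we identify the correct matrix entries $T_{ij}$. Since $\perm(\vT) = \sum_{\sigma} \prod_i T_{i\sigma(i)}$, it suffices to show that $\prob{I(\sigma)}$ factors as a product $\prod_{i} T_{i\sigma(i)}$ over the positions of $\sigma$, where $T_{ij}$ depends only on the pair $(\vx_i,\vx_j)$. Recall that under our convention $\sigma_{\text{true}} = \id$, so the output paired with $\vx_i$ is $\vy_i$, the noisy version of $\vx_i$.

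For part (i), the BEC case, the event $I(\sigma)$ is that $\sigma$ appears as a matching in $\cG$, i.e. for every $i$ the output $\vy_i$ (the true output of $\vx_i$) matches the codeword $\vx_{\sigma(i)}$. Since the erasures on distinct outputs are independent, this event factors over $i$, and by the edge-probability computation already carried out in the proof of Lemma~\ref{lem:wt-enum}, the probability that $\vy_i$ matches $\vx_{\sigma(i)}$ is exactly $p^{d_H(\vx_i,\vx_{\sigma(i)})}$ — the $d$ positions where $\vx_i$ and $\vx_{\sigma(i)}$ differ must all be erased. Hence $\prob{I(\sigma)} = \prod_{i} p^{d_H(\vx_i,\vx_{\sigma(i)})} = \prod_i (\TBEC)_{i\sigma(i)}$, and summing over $\bbS_M$ gives $U = \perm(\TBEC)$ directly.

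For part (ii), the BSC case, the event $I(\sigma)$ is that the cost of matching $\sigma$ is at most the cost of $\id$, i.e. $\sum_i d_H(\vx_{\sigma(i)}, \vy_i) \le \sum_i d_H(\vx_i, \vy_i)$. This is a single inequality coupling all coordinates, so $\prob{I(\sigma)}$ does \emph{not} factor exactly — this is the main obstacle. The strategy is to bound it above and below by quantities that \emph{do} factor. First I would decompose the inequality coordinatewise: the contribution of output $\vy_i$ to the total cost difference depends only on $\vx_i$, $\vx_{\sigma(i)}$ and $\vy_i$, so the events are independent across $i$, but their sum being $\le 0$ does not split as a product. The standard trick is a Chernoff/Bhattacharyya-type pairwise bound: for the pairwise event that $\vx_{\sigma(i)}$ is at least as close to $\vy_i$ as $\vx_i$ is, one obtains the classical bound $(4p(1-p))^{d_H(\vx_i,\vx_{\sigma(i)})/2}$ for the upper estimate (applying the inequality term-by-term and using independence, the product of the pairwise upper bounds dominates the joint probability), giving $\prob{I(\sigma)} \le \prod_i (\TBSCu)_{i\sigma(i)}$. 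For the lower bound, one restricts attention to a single decisive coordinate or uses the reverse Bhattacharyya estimate to get $\prob{I(\sigma)} \ge \prod_i (p(1-p))^{d_H(\vx_i,\vx_{\sigma(i)})/2} = \prod_i (\TBSCl)_{i\sigma(i)}$.

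Summing these per-permutation inequalities over all $\sigma \in \bbS_M$ and applying the permanent definition \eqref{eq:perm} term-by-term yields $\perm(\TBSCl) \le U \le \perm(\TBSCu)$, completing part (ii). The delicate point to verify carefully is that the factor-$4$ appearing in $\TBSCu$ versus its absence in $\TBSCl$ comes precisely from the slack in the Chernoff bound on the pairwise error event $\prob{d_H(\vx_{\sigma(i)},\vy_i) \le d_H(\vx_i,\vy_i)}$; I expect the upper bound to be the textbook Bhattacharyya computation over the $d_H(\vx_i,\vx_{\sigma(i)})$ disagreement positions, while the lower bound will require isolating the most-likely single error configuration that forces the pairwise inequality, which is where the $(p(1-p))^{d/2}$ without the $4$ arises.
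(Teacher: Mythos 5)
Your part (i) is correct and is essentially the paper's own argument (independence across outputs plus the edge-probability computation from Lemma~\ref{lem:wt-enum}). The problem is in part (ii), specifically your upper bound. Writing $\Delta_i = d_H(\vx_{\sigma(i)},\vy_i)-d_H(\vx_i,\vy_i)$, the event $I(\sigma)$ is $\bigl\{\sum_{i}\Delta_i\le 0\bigr\}$, and this event is \emph{strictly larger} than the intersection $\bigcap_i\{\Delta_i\le 0\}$ of the pairwise events: the sum can be nonpositive while several individual $\Delta_i$ are positive, compensated by one very negative coordinate. Consequently $\prob{I(\sigma)}$ is \emph{not} in general bounded above by $\prod_i \prob{\Delta_i\le 0}$, so "applying the inequality term-by-term and using independence" does not establish the upper bound --- the product of pairwise probabilities can be smaller than the joint probability by an unbounded factor. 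The inequality you want is nevertheless true, but the proof must route independence through the moment generating function rather than through the event: for $0<s\le 1$, $\prob{\sum_i\Delta_i\le 0}\le \bbE\bigl[s^{\sum_i\Delta_i}\bigr]=\prod_i \bbE\bigl[s^{\Delta_i}\bigr]$, and choosing $s=\sqrt{p/(1-p)}$ makes each factor equal to $\bigl(2\sqrt{p(1-p)}\bigr)^{d_H(\vx_i,\vx_{\sigma(i)})}=\bigl(4p(1-p)\bigr)^{d_H(\vx_i,\vx_{\sigma(i)})/2}$. That is the Chernoff/Bhattacharyya argument you name, but as written your justification conflates the product of the pairwise \emph{bounds} (which does dominate $\prob{I(\sigma)}$) with the product of the pairwise \emph{probabilities} (which need not).

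The paper sidesteps this entirely with a cleaner observation: concatenate. The event $I(\sigma)$ is exactly a \emph{single} two-codeword comparison between the length-$Mn$ words $\vx_{\sigma(1)}\cdots\vx_{\sigma(M)}$ and $\vx_1\cdots\vx_M$ at Hamming distance $D=\sum_i d_H(\vx_i,\vx_{\sigma(i)})$, whose probability is exactly the binomial tail $\sum_{j=\lceil D/2\rceil}^{D}\binom{D}{j}p^j(1-p)^{D-j}$; the Barg--Forney two-sided estimate \eqref{eq:BargForney} then gives both bounds at once, and they factor over $i$ automatically because $D$ is additive. Your lower bound, by contrast, \emph{can} be salvaged in the per-coordinate style, but not by "restricting attention to a single decisive coordinate": what works is the all-coordinates inclusion $\bigcap_i\{\Delta_i\le 0\}\subseteq I(\sigma)$ (this containment does hold in the direction needed for a lower bound), after which independence gives $\prob{I(\sigma)}\ge\prod_i\prob{\Delta_i\le 0}$ and one applies the pairwise lower estimate to each factor. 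Also note that your "most-likely single error configuration" heuristic yields $\prod_i p^{\lceil d_i/2\rceil}(1-p)^{\lfloor d_i/2\rfloor}$, which matches the target $\prod_i\bigl(p(1-p)\bigr)^{d_i/2}$ only when every $d_i$ is even; for odd distances a single configuration falls short and one needs the full binomial-tail estimate.
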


\begin{proof}In both cases, following the definition of permanent \eqref{eq:perm}, it suffices to demonstrate a certain relationship between $\prob{I(\sigma)}$ and $\prod_{i\in[M]} T_{i\sigma(i)}$.
	
\begin{enumerate}[(i)]
\item First, we consider the binary erasure channel and fix some permutation $\sigma$. 
Then $I(\sigma)$ is the event that the input-output graph $\cG$ contains the perfect matching corresponding to $\sigma$. In other words, for all $i\in [M]$, there is an edge between the input $\vx_i$ and the output of $\vx_{\sigma(i)}$. 
As we argued in Lemma~\ref{lem:wt-enum}, this probability of given by $p^{d_H(\vx_i,\vx_{\sigma(i)})}$ and thus, $\prob{I(\sigma)} =  \prod_{i\in[M]} p^{d_H(\vx_i,\vx_{\sigma(i)})}$. Finally, the proposition then follows from the definition of $\TBEC$.

\item Next, we consider the binary symmetric channel and again, fix some permutation $\sigma$. Then the $I(\sigma)$ is the event where the cost of the matching $\sigma$ is at most the cost of the identity matching. In other words, we have the $(Mn)$-length word $\vx_{\sigma(1)}\vx_{\sigma(2)}\cdots \vx_{\sigma(M)}$ is closer to the output $\vy_1\vy_2\cdots \vy_M$ in terms of Hamming distance than the word $\vx_{1}\vx_{2}\cdots \vx_{M}$. Let $D$ denote the Hamming distance between $\vx_{\sigma(1)}\vx_{\sigma(2)}\cdots \vx_{\sigma(M)}$ and $\vx_{1}\vx_{2}\cdots \vx_{M}$. Then the probability of event $I_\sigma$ is given by $\sum_{j=\lceil D/2\rceil}^{D}\binom{D}{j}p^{j}(1-p)^{D-j}$\,. Now, Barg and Forney \cite[Section III]{BargForney.2002} provided the following estimates for the latter quantity. Specifically, they showed that  \begin{equation}\label{eq:BargForney}
(p(1-p))^{D/2} \le \sum_{j=\lceil D/2\rceil}^{D}\binom{D}{j}p^{j}(1-p)^{D-j} \le (4p(1-p))^{D/2}\,.
\end{equation}
Then, using $D=\sum_{i\in [M]} d_H(\vx_i,\vx_{\sigma(i)})$, we obtain the proposition.
\end{enumerate}
\end{proof}

Unfortunately, as mentioned earlier, determining the exact value of the permanent of a general matrix is computationally slow. 
Nevertheless, when the code has a certain minimum distance, we are able to use analytic combinatorics \cite{SedgewickFlajolet.2013} to provide an upper bound for $U$.

\begin{theorem}\label{thm:bound1}
	Suppose that $\cC$ is a code with minimum distance $d$.
	Set 
	\begin{equation}\label{eq:theta}
		\theta = 
		\begin{cases}
			p^d, & \text{if the channel is ${\rm BEC}(p)$},\\
			\left(4p(1-p)\right)^{d/2}, & \text{if the channel is ${\rm BSC}(p)$},\\
		\end{cases}	
	\end{equation}
	Then 
	\begin{equation}\label{eq:bound1}
		U\le M!\sum_{i=0}^M \frac{\theta^{M-i}(1-\theta)^i}{i!} \,.
	\end{equation}
	Therefore, $\perror{\cC}\le \left(M!\sum_{i=0}^M \frac{\theta^{M-i}(1-\theta)^i}{i!}\right)-1$\,. 
\end{theorem}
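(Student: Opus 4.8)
The plan is to reduce the claimed bound on $U$ to a bound on a permanent, and then to evaluate that permanent in closed form for a highly structured matrix. By Proposition~\ref{prop:permanent-bound}, it suffices to bound $\perm(\vT)$ from above, where $\vT$ is the matrix with $(i,j)$-entry $\mu^{d_H(\vx_i,\vx_j)}$: for the BEC I would take $\mu = p$, so that $U = \perm(\TBEC)$, while for the BSC I would take $\mu = \sqrt{4p(1-p)}$, so that $U \le \perm(\TBSCu)$. In either case $0 < \mu < 1$ (using $0<p<1/2$ for the BSC, which gives $4p(1-p)<1$), and $\theta = \mu^d$ is exactly the quantity defined in \eqref{eq:theta}.

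First I would use the minimum-distance hypothesis to control the entries of $\vT$. The diagonal entries equal $\mu^0 = 1$, and for $i \ne j$ we have $d_H(\vx_i,\vx_j) \ge d$, whence $\mu^{d_H(\vx_i,\vx_j)} \le \mu^d = \theta$. Since every entry of $\vT$ is nonnegative and the permanent is a sum of products of entries, it is monotone nondecreasing in each entry; replacing every off-diagonal entry by $\theta$ therefore only increases the permanent. Writing $\vT' = (1-\theta)I + \theta J$ (with $I$ the identity and $J$ the all-ones matrix, so that $\vT'$ has $1$ on the diagonal and $\theta$ off it), I obtain $\perm(\vT) \le \perm(\vT')$.

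The heart of the argument is then a closed-form evaluation of $\perm(\vT')$. Expanding each factor $T'_{i\sigma(i)} = (1-\theta)\delta_{i\sigma(i)} + \theta$ inside the permanent and summing over $\sigma$, I would group the resulting terms according to the subset $S \subseteq [M]$ of coordinates that contribute the $(1-\theta)\delta$ factor. Such a term survives only when $\sigma$ fixes every index in $S$, and the number of permutations fixing a prescribed set of size $i$ is $(M-i)!$. This yields
\[
\perm(\vT') = \sum_{i=0}^M \binom{M}{i}(1-\theta)^i\theta^{M-i}(M-i)! = \sum_{i=0}^M \frac{M!}{i!}\theta^{M-i}(1-\theta)^i,
\]
which is precisely the right-hand side of \eqref{eq:bound1}. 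Chaining the inequalities gives $U \le \perm(\vT) \le \perm(\vT')$, and the bound on $\perror{\cC}$ then follows at once from \eqref{upper}.

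I expect the only real subtlety to be the bookkeeping in the permanent expansion — correctly observing that $\binom{M}{i}(M-i)! = M!/i!$ collapses the sum into the derangement-style expression the theorem anticipates — rather than any deep step; the monotonicity and minimum-distance reductions are routine once $\vT$ is identified. An equivalent route, in the spirit of the analytic-combinatorics remark preceding the statement, would be to recognize $\sum_i \theta^{M-i}(1-\theta)^i/i!$ as arising from the exponential generating function governing $\perm((1-\theta)I + \theta J)$, but the direct subset expansion seems cleanest.
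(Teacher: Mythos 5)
Your proof is correct, and its first two steps coincide exactly with the paper's: both invoke Proposition~\ref{prop:permanent-bound} to reduce the claim to bounding $\perm(\vT)$ for $\vT\in\{\TBEC,\TBSCu\}$, and both use monotonicity of the permanent in nonnegative entries to dominate $\vT$ entrywise by the matrix with ones on the diagonal and $\theta$ off it (your $\vT'$, the paper's $\vU$), noting as you do that this needs $\mu\le 1$ and $d_H(\vx_i,\vx_j)\ge d$ off the diagonal. Where you genuinely diverge is in evaluating the permanent of that structured matrix. The paper writes $\prod_{i}\vU_{i\sigma(i)}=\theta^{M-f(\sigma)}$, with $f(\sigma)$ the number of fixed points of $\sigma$, and then evaluates $\sum_{\sigma\in\bbS_M}\theta^{M-f(\sigma)}$ by citing the bivariate exponential generating function $\sum_{j,M}(F_{j,M}/M!)\,u^jz^M=e^{(u-1)z}/(1-z)$ and extracting the coefficient of $z^M$ in $e^{(1-\theta)z}/(1-\theta z)$. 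You instead expand each factor as $(1-\theta)\delta_{i\sigma(i)}+\theta$, group terms by the subset $S$ of indices contributing the $\delta$ factor, and count the $(M-|S|)!$ permutations fixing $S$ pointwise, which collapses via $\binom{M}{i}(M-i)!=M!/i!$ to the stated closed form. Your route is more elementary and self-contained --- it needs no generating-function machinery or external citation, and in effect it \emph{proves} the fixed-point identity that the paper imports from analytic combinatorics --- while the paper's EGF approach fits the analytic toolkit it advertises elsewhere and would adapt more readily to refined fixed-point statistics. Both compute the same exact value for the dominating permanent, so the final chain $U\le\perm(\vT)\le M!\sum_{i=0}^{M}\theta^{M-i}(1-\theta)^i/i!$ and the deduction of the bound on $\perror{\cC}$ from \eqref{upper} agree.
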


\begin{proof}
	Let $\vT$ be either $\TBEC$ or $\TBSCu$. 
	Then using Proposition~\ref{prop:permanent-bound}, it suffices to show that $\perm(\vT)$ is at most $\left(M!\sum_{i=0}^M \theta^{M-i}(1-\theta)^i/i!\right)$.
	
	Now, we consider $\vU$ whose entries are such that $\vU_{ii}=1$ and $\vU_{ij}=\theta$ if $i\ne j$. Then we observe that $\vT_{\ij}\le \vU_{ij}$ for all $i,j\in[M]$. Thus, $\perm(\vT)$ is at most $\perm(\vU)$.
	For each permutation $\sigma\in\bbS_M$, the product $\prod_{i\in[M]} \vU_{i\sigma(i)}$ is given by $\prod_{i\ne \sigma_i} \theta = \theta^{M-f(\sigma)}$, 
	where $\theta=p^d$ and $f(\sigma)=|\{i = \sigma_i:i\in [M]\}|$.
	Note that $f(\sigma)$ is also known as the number of fixed points of a permutation $\sigma$.
	
	Let $F_{j,M}$ be the number of permutations of length $M$ with exactly $j$ fixed points. It turns out the following exponential generating function is known (see \cite[Theorem 7.12]{SedgewickFlajolet.2013}).
	\begin{equation*}
		\sum_{j,M}  \frac{F_{j,M}}{M!} u^jz^M= \frac{e^{(u-1)z}}{1-z}
	\end{equation*}
	
	With this result, let us estimate $\sum_{\sigma\in {\bbS}_M}\theta^{M-f(\sigma)}$.
	
	\begin{align*}
		\sum_{M\ge 0} \frac{1}{M!}(\sum_{\sigma\in {\bbS}_M}\theta^{M-f(\sigma)})z^M 
		& = \sum_{M,j} \frac{F_{j,M}}{M!}\theta^{M-j}z^M \\
		& = \sum_{M,j} \frac{F_{j,M}}{M!}\theta^{-j}(\theta z)^M \\
		& = \frac{e^{(\theta^{-1}-1)\theta z}}{1-\theta z}
		= \frac{e^{(1-\theta)z}}{1-\theta z}
	\end{align*}
	
	Therefore, 
	$\sum_{\sigma\in {\bbS}_M}\theta^{M-f(\sigma)} 
	= M![z^M] \frac{e^{(1-\theta)z}}{1-\theta z}$.
	Now, since 
	$\frac{1}{1-\theta z}  = \sum_{i\ge 0}(\theta z)^i$ and
	$e^{(1-\theta) z}  = \sum_{i\ge 0}(1-\theta)^i z^i/i!\, $, we have that
	\[\sum_{\sigma\in {\bbS}_M}\theta^{M-f(\sigma)}  = M!\sum_{i=0}^M \frac{\theta^{M-i}(1-\theta)^i}{i!},\] 
	\noindent as required.
\end{proof}

\begin{figure}[!t]
	\scriptsize
	\begin{tikzpicture}[x = 5mm, y = 2.4mm]
		\tikzstyle{state}=[rectangle,fill=white,draw,line width=0.5mm, text width = 8mm, align=center]
		\tikzstyle{label}=[fill=white, inner sep=1pt]
		\node[state] at (0,0) (0)     {$\varnothing$};
		\node[state] at (30,0) (123)  {$123$};
		
		\node[state] at (10,40) (11) {$1,1$};
		\path[-] (0) edge node[label, pos = 0.8] {$T_{11}$} (11);
		\node[state] at (20,43) (1212) {$12,12$};
		\path[-] (11) edge node[label, pos = 0.8] {$T_{22}$} (1212);
		\path[-] (1212.east) edge node[label, pos = 0.2] {$T_{33}$} (123);
		\node[state] at (20,41) (1213) {$12,13$};
		\path[-] (11) edge node[label, pos = 0.8] {$T_{22}T_{32}$} (1213);
		\path[-] (1213.east) edge node[label, pos = 0.2] {$T_{32}T_{22}$} (123);
		\node[state] at (20,39) (1312) {$13,12$};
		\path[-] (11) edge node[label, pos = 0.8] {$T_{32}T_{22}$} (1312);
		\path[-] (1312.east) edge node[label, pos = 0.2] {$T_{22}T_{32}$} (123);
		\node[state] at (20,37) (1313) {$13,13$};
		\path[-] (11) edge node[label, pos = 0.8] {$T_{32}$} (1313);
		\path[-] (1313.east) edge node[label, pos = 0.2] {$T_{23}$} (123);
		\node[state] at (10,30) (12) {$1,2$};
		\path[-] (0) edge node[label, pos = 0.8] {$T_{11}T_{21}$} (12);
		\node[state] at (20,33) (1221) {$12,21$};
		\path[-] (12) edge node[label, pos = 0.8] {$T_{22}T_{12}$} (1221);
		\path[-] (1221.east) edge node[label, pos = 0.2] {$T_{33}$} (123);
		\node[state] at (20,31) (1223) {$12,23$};
		\path[-] (12) edge node[label, pos = 0.8] {$T_{22}T_{32}$} (1223);
		\path[-] (1223.east) edge node[label, pos = 0.2] {$T_{32}T_{12}$} (123);
		\node[state] at (20,29) (1321) {$13,21$};
		\path[-] (12) edge node[label, pos = 0.8] {$T_{32}T_{12}$} (1321);
		\path[-] (1321.east) edge node[label, pos = 0.2] {$T_{22}T_{32}$} (123);
		\node[state] at (20,27) (1323) {$13,23$};
		\path[-] (12) edge node[label, pos = 0.8] {$T_{32}$} (1323);
		\path[-] (1323.east) edge node[label, pos = 0.2] {$T_{22}T_{12}$} (123);
		\node[state] at (10,20) (13) {$1,3$};
		\path[-] (0) edge node[label, pos = 0.8] {$T_{11}T_{31}$} (13);
		\node[state] at (20,23) (1231) {$12,31$};
		\path[-] (13) edge node[label, pos = 0.8] {$T_{22}T_{12}$} (1231);
		\path[-] (1231.east) edge node[label, pos = 0.2] {$T_{32}T_{22}$} (123);
		\node[state] at (20,21) (1232) {$12,32$};
		\path[-] (13) edge node[label, pos = 0.8] {$T_{22}$} (1232);
		\path[-] (1232.east) edge node[label, pos = 0.2] {$T_{32}T_{12}$} (123);
		\node[state] at (20,19) (1331) {$13,31$};
		\path[-] (13) edge node[label, pos = 0.8] {$T_{32}T_{12}$} (1331);
		\path[-] (1331.east) edge node[label, pos = 0.2] {$T_{23}$} (123);
		\node[state] at (20,17) (1332) {$13,32$};
		\path[-] (13) edge node[label, pos = 0.8] {$T_{32}T_{22}$} (1332);
		\path[-] (1332.east) edge node[label, pos = 0.2] {$T_{22}T_{12}$} (123);
		\node[state] at (10,10) (21) {$2,1$};
		\path[-] (0) edge node[label, pos = 0.8] {$T_{21}T_{11}$} (21);
		\node[state] at (20,13) (2112) {$21,12$};
		\path[-] (21) edge node[label, pos = 0.8] {$T_{12}T_{22}$} (2112);
		\path[-] (2112.east) edge node[label, pos = 0.2] {$T_{33}$} (123);
		\node[state] at (20,11) (2113) {$21,13$};
		\path[-] (21) edge node[label, pos = 0.8] {$T_{12}T_{32}$} (2113);
		\path[-] (2113.east) edge node[label, pos = 0.2] {$T_{32}T_{22}$} (123);
		\node[state] at (20,9) (2312) {$23,12$};
		\path[-] (21) edge node[label, pos = 0.8] {$T_{32}T_{22}$} (2312);
		\path[-] (2312.east) edge node[label, pos = 0.2] {$T_{12}T_{32}$} (123);
		\node[state] at (20,7) (2313) {$23,13$};
		\path[-] (21) edge node[label, pos = 0.8] {$T_{32}$} (2313);
		\path[-] (2313.east) edge node[label, pos = 0.2] {$T_{12}T_{22}$} (123);
		\node[state] at (10,0) (22) {$2,2$};
		\path[-] (0) edge node[label, pos = 0.8] {$T_{21}$} (22);
		\node[state] at (20,3) (2121) {$21,21$};
		\path[-] (22) edge node[label, pos = 0.8] {$T_{12}$} (2121);
		\path[-] (2121.east) edge node[label, pos = 0.2] {$T_{33}$} (123);
		\node[state] at (20,1) (2123) {$21,23$};
		\path[-] (22) edge node[label, pos = 0.8] {$T_{12}T_{32}$} (2123);
		\path[-] (2123.east) edge node[label, pos = 0.2] {$T_{32}T_{12}$} (123);
		\node[state] at (20,-1) (2321) {$23,21$};
		\path[-] (22) edge node[label, pos = 0.8] {$T_{32}T_{12}$} (2321);
		\path[-] (2321.east) edge node[label, pos = 0.2] {$T_{12}T_{32}$} (123);
		\node[state] at (20,-3) (2323) {$23,23$};
		\path[-] (22) edge node[label, pos = 0.8] {$T_{32}$} (2323);
		\path[-] (2323.east) edge node[label, pos = 0.2] {$T_{13}$} (123);
		\node[state] at (10,-10) (23) {$2,3$};
		\path[-] (0) edge node[label, pos = 0.8] {$T_{21}T_{31}$} (23);
		\node[state] at (20,-7) (2131) {$21,31$};
		\path[-] (23) edge node[label, pos = 0.8] {$T_{12}$} (2131);
		\path[-] (2131.east) edge node[label, pos = 0.2] {$T_{32}T_{22}$} (123);
		\node[state] at (20,-9) (2132) {$21,32$};
		\path[-] (23) edge node[label, pos = 0.8] {$T_{12}T_{22}$} (2132);
		\path[-] (2132.east) edge node[label, pos = 0.2] {$T_{32}T_{12}$} (123);
		\node[state] at (20,-11) (2331) {$23,31$};
		\path[-] (23) edge node[label, pos = 0.8] {$T_{32}T_{12}$} (2331);
		\path[-] (2331.east) edge node[label, pos = 0.2] {$T_{12}T_{22}$} (123);
		\node[state] at (20,-13) (2332) {$23,32$};
		\path[-] (23) edge node[label, pos = 0.8] {$T_{32}T_{22}$} (2332);
		\path[-] (2332.east) edge node[label, pos = 0.2] {$T_{13}$} (123);
		\node[state] at (10,-20) (31) {$3,1$};
		\path[-] (0) edge node[label, pos = 0.8] {$T_{31}T_{11}$} (31);
		\node[state] at (20,-17) (3112) {$31,12$};
		\path[-] (31) edge node[label, pos = 0.8] {$T_{12}T_{22}$} (3112);
		\path[-] (3112.east) edge node[label, pos = 0.2] {$T_{22}T_{32}$} (123);
		\node[state] at (20,-19) (3113) {$31,13$};
		\path[-] (31) edge node[label, pos = 0.8] {$T_{12}T_{32}$} (3113);
		\path[-] (3113.east) edge node[label, pos = 0.2] {$T_{23}$} (123);
		\node[state] at (20,-21) (3212) {$32,12$};
		\path[-] (31) edge node[label, pos = 0.8] {$T_{22}$} (3212);
		\path[-] (3212.east) edge node[label, pos = 0.2] {$T_{12}T_{32}$} (123);
		\node[state] at (20,-23) (3213) {$32,13$};
		\path[-] (31) edge node[label, pos = 0.8] {$T_{22}T_{32}$} (3213);
		\path[-] (3213.east) edge node[label, pos = 0.2] {$T_{12}T_{22}$} (123);
		\node[state] at (10,-30) (32) {$3,2$};
		\path[-] (0) edge node[label, pos = 0.8] {$T_{31}T_{21}$} (32);
		\node[state] at (20,-27) (3121) {$31,21$};
		\path[-] (32) edge node[label, pos = 0.8] {$T_{12}$} (3121);
		\path[-] (3121.east) edge node[label, pos = 0.2] {$T_{22}T_{32}$} (123);
		\node[state] at (20,-29) (3123) {$31,23$};
		\path[-] (32) edge node[label, pos = 0.8] {$T_{12}T_{32}$} (3123);
		\path[-] (3123.east) edge node[label, pos = 0.2] {$T_{22}T_{12}$} (123);
		\node[state] at (20,-31) (3221) {$32,21$};
		\path[-] (32) edge node[label, pos = 0.8] {$T_{22}T_{12}$} (3221);
		\path[-] (3221.east) edge node[label, pos = 0.2] {$T_{12}T_{32}$} (123);
		\node[state] at (20,-33) (3223) {$32,23$};
		\path[-] (32) edge node[label, pos = 0.8] {$T_{22}T_{32}$} (3223);
		\path[-] (3223.east) edge node[label, pos = 0.2] {$T_{13}$} (123);
		\node[state] at (10,-40) (33) {$3,3$};
		\path[-] (0) edge node[label, pos = 0.8] {$T_{31}$} (33);
		\node[state] at (20,-37) (3131) {$31,31$};
		\path[-] (33) edge node[label, pos = 0.8] {$T_{12}$} (3131);
		\path[-] (3131.east) edge node[label, pos = 0.2] {$T_{23}$} (123);
		\node[state] at (20,-39) (3132) {$31,32$};
		\path[-] (33) edge node[label, pos = 0.8] {$T_{12}T_{22}$} (3132);
		\path[-] (3132.east) edge node[label, pos = 0.2] {$T_{22}T_{12}$} (123);
		\node[state] at (20,-41) (3231) {$32,31$};
		\path[-] (33) edge node[label, pos = 0.8] {$T_{22}T_{12}$} (3231);
		\path[-] (3231.east) edge node[label, pos = 0.2] {$T_{12}T_{22}$} (123);
		\node[state] at (20,-43) (3232) {$32,32$};
		\path[-] (33) edge node[label, pos = 0.8] {$T_{22}$} (3232);
		\path[-] (3232.east) edge node[label, pos = 0.2] {$T_{13}$} (123);		
	\end{tikzpicture}
\vfill
\caption{Non-minimal trellis that computes $\permsec(\vT)$ where $\vT$ is a $(3\times 3)$-matrix. Here, the trellis has 47 vertices and 81 edges.}
\label{fig:nonminimal}
\end{figure}

\subsection{Computing $V$} \label{sec:computeV}

Recall that $V\triangleq \sum_{\sigma,\tau \in\bbS_M} \prob{I(\sigma)\wedge I(\tau)}$.
As pointed out earlier, the quantity $V$ comprises $(M!)^2 = \Theta\left(M(M/e)^{2M} \right)$ summands and in this subsection, we borrow ideas from trellis theory~\cite{Vardy.1998} to reduce the running time to $O(M^{1.5}4^M)$. 

Before we discuss about trellis theory, we recall the definition of the matrices $\TBEC$, $\TBSCu$, and $\TBSCl$ and establish a relationship of the quantity $V$ with the following permanent-like matrix function.

\begin{definition}
	Let $\vT$ be an $M\times M$-matrix whose $(i,j)$-th entry is $T_{ij}$.
	For $i,k\in [M]$, we further define $\phi(T_{ij},T_{kj})=T_{ij}$ if $i= k$ and $\phi(T_{ij},T_{kj})=T_{ij}T_{kj}$ if $i\ne k$.
	Then the {\em second-order permanent} of $\vT$ is given by the value
	\begin{equation}\label{eq:perm}
		\permsec(\vT)\triangleq \sum_{\sigma,\tau \in \bbS_M} \prod_{j\in[M]} \phi(T_{\sigma(j)j},T_{\tau(j)j})\,.
	\end{equation}
\end{definition}

We have the following analogue of Proposition~\ref{prop:permanent-bound}.

\begin{proposition}\label{prop:permsec}
	Let $\cC$ be a code and let $0<p<1/2$. Recall that $V\triangleq \sum_{\sigma,\tau \in\bbS_M} \prob{I(\sigma)\wedge I(\tau)}$.
	\begin{enumerate}[(i)]
		\item If we transmit the codebook $\cC$ through a ${\rm BEC}(p)$, then $V=\permsec(\TBEC)$.
		\item If we transmit the codebook $\cC$ through a ${\rm BSC}(p)$, then $ \permsec(\TBSCl)\le V\le \permsec(\TBSCu)$.
	\end{enumerate}
\end{proposition}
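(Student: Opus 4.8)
The plan is to establish the proposition by mirroring the proof of Proposition~\ref{prop:permanent-bound}: since both $V$ and $\permsec(\vT)$ are sums over ordered pairs $(\sigma,\tau)\in\bbS_M\times\bbS_M$, it suffices to compare, for each fixed pair, the probability $\prob{I(\sigma)\wedge I(\tau)}$ with the single product $\prod_{j\in[M]}\phi(T_{\sigma(j)j},T_{\tau(j)j})$, and then sum the per-pair relationship over all $(\sigma,\tau)$. Throughout I keep the normalisation $\sigma_{\mathrm{true}}=\id$, so that $\vy_j$ is the channel output of $\vx_j$, and I exploit that the $M$ channel outputs are produced by \emph{independent} uses of the channel.

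For part (i), I would fix $\sigma,\tau$ and factor $\prob{I(\sigma)\wedge I(\tau)}$ over the $M$ outputs. The event $I(\sigma)\wedge I(\tau)$ asks that both matchings appear in the input--output graph $\cG$; grouping the required edges by the output they touch, the output $\vy_j$ must simultaneously match the two codewords dictated by $\sigma$ and by $\tau$ at column $j$. Because distinct outputs arise from independent channel uses, the joint probability factors as the product over $j$ of the per-output probabilities. For a column where $\sigma$ and $\tau$ prescribe the \emph{same} codeword, this per-output probability is exactly the single matching probability computed in Lemma~\ref{lem:wt-enum}, namely the entry $T_{\sigma(j)j}$ of $\TBEC$; for a column where they prescribe \emph{different} codewords, one must compute the probability that one output is consistent with two distinct codewords. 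Identifying this two-codeword contribution with $\phi(T_{\sigma(j)j},T_{\tau(j)j})$ and reassembling the product over $j$ gives $\prob{I(\sigma)\wedge I(\tau)}=\prod_j\phi(T_{\sigma(j)j},T_{\tau(j)j})$, and summing over $(\sigma,\tau)$ produces $V=\permsec(\TBEC)$.

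For part (ii), the event $I(\sigma)$ is the \emph{global} comparison $\sum_i d_H(\vx_i,\vy_{\sigma(i)})\le\sum_i d_H(\vx_i,\vy_i)$, i.e. the noise favours the super-word $\vx_{\sigma(1)}\cdots\vx_{\sigma(M)}$ over $\vx_1\cdots\vx_M$ at combined distance $D_\sigma=\sum_i d_H(\vx_i,\vx_{\sigma(i)})$. I would therefore view $I(\sigma)\wedge I(\tau)$ as the event that the noise simultaneously favours the two super-words attached to $\sigma$ and $\tau$, reduce this joint tail probability to the positions on which those super-words disagree with the transmitted one, and bound it above and below using the Barg--Forney estimates \eqref{eq:BargForney} applied to the combined distance. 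Arranging the resulting per-column bounds into the product form of $\phi$, with the substitutions $(p(1-p))^{d/2}$ for $\TBSCl$ and $(4p(1-p))^{d/2}$ for $\TBSCu$, yields per-pair bounds that sum to $\permsec(\TBSCl)\le V\le\permsec(\TBSCu)$.

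The main obstacle, in both parts, is the per-column analysis when $\sigma(j)\ne\tau(j)$, i.e. when a single output is required to be consistent with two \emph{distinct} codewords. For the BEC this forces the union of the two codewords' disagreement sets with $\vx_j$ to be erased, and the crux is to reconcile this combined-erasure requirement with the product form $T_{\sigma(j)j}T_{\tau(j)j}$ that $\phi$ assigns in the differing case; for the BSC the difficulty is compounded by the fact that $I(\sigma)$ is not a per-column event but a single inequality on the total cost, so the joint event does not factor directly and the reduction to a column-wise combined distance that feeds the Barg--Forney bound is where the real care is needed. Handling this differing-codeword case correctly --- and checking that the bookkeeping $\sigma\leftrightarrow\sigma^{-1}$ induced by the column indexing of $\permsec$ leaves the summed quantities $V$ and $\permsec$ unchanged --- is the heart of the argument.
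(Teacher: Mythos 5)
You stop precisely at the step where the proof has to do real work, and that gap cannot be closed the way you propose. In part (i), after (correctly) factoring $\prob{I(\sigma)\wedge I(\tau)}$ over the $M$ independent outputs, you promise to identify the per-output probability in the case $\sigma(j)\ne\tau(j)$ with $\phi(T_{\sigma(j)j},T_{\tau(j)j})=T_{\sigma(j)j}T_{\tau(j)j}$. But if the output $\vy_j$ (the noisy copy of $\vx_j$) must match two distinct codewords $\vx_a$ and $\vx_b$, the event is that the \emph{union} $S_a\cup S_b$ of the two disagreement sets is erased, so its probability is $p^{|S_a\cup S_b|}$, while $T_{aj}T_{bj}=p^{|S_a|+|S_b|}$. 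These agree only when $S_a\cap S_b=\varnothing$; otherwise $p^{|S_a\cup S_b|}>p^{|S_a|+|S_b|}$. Hence your decomposition proves only $V\ge\permsec(\TBEC)$, not the claimed equality. The discrepancy is real, not a bookkeeping artifact: for $\cC=\{000,110,011\}$, with $\sigma$ the $3$-cycle given by $\sigma(1)=2$, $\sigma(2)=3$, $\sigma(3)=1$, and $\tau=\sigma^{-1}$, every output must match two codewords whose disagreement sets overlap, and this single pair contributes $p^{9}$ to $V$ versus $p^{12}$ to $\permsec(\TBEC)$.

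The paper's own (sketched) proof takes a different grouping that sidesteps your obstacle locally: it collects the two required edges by \emph{input} $\vx_j$, namely $(\vx_j,\vy_{\sigma(j)})$ and $(\vx_j,\vy_{\tau(j)})$; when $\sigma(j)\ne\tau(j)$ these touch \emph{distinct} outputs, so each group's probability is exactly $T_{\sigma(j)j}T_{\tau(j)j}$ by independence of the channel uses. However, this only relocates the difficulty you identified: a fixed output $\vy_k$ receives erasure constraints from the two different groups $\sigma^{-1}(k)$ and $\tau^{-1}(k)$, and when these groups differ the constraints overlap exactly as in your analysis, so the product over groups again overshoots the true probability of the intersection. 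In other words, your instinct about where ``the real care is needed'' is correct, and carrying out that care shows that the column-wise product $\prod_j \phi(T_{\sigma(j)j},T_{\tau(j)j})$ is in general only a lower bound on $\prob{I(\sigma)\wedge I(\tau)}$ for the BEC. Your part (ii) plan has a second, independent gap: the Barg--Forney estimate \eqref{eq:BargForney} bounds the tail probability of a \emph{single} distance comparison, whereas $I(\sigma)\wedge I(\tau)$ is an intersection of two correlated events driven by the same noise realization; ``applying the estimate to the combined distance'' of the joint event is not a defined operation, and you give no argument that produces the column-wise product form that $\permsec$ requires.
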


\begin{proof}
The proof is similar to the proof of Proposition~\ref{prop:permanent-bound}. Hence, we only sketch the proof for the binary erasure channel.
Fix a pair of permutations $\sigma$ and $\tau$. 
Then $I(\sigma)\wedge I(\tau)$ is the event that input-output graph $\cG$ contains both perfect matchings $\sigma$ and $\tau$. Hence, for all $j\in [M]$, if $\sigma(j)\ne\tau(j)$, we have both edges: one edge between the input $\vx_j$ and the output of $\vx_{\sigma(j)}$, and another edge between the input $\vx_j$ and the output of $\vx_{\tau(j)}$. 
If $\sigma(j)=\tau(j)$, we have the edge between the input $\vx_j$ and the output of $\vx_{\sigma(j)}$. To conclude the proof, we then proceed as in the proof of Proposition 10(i).
\end{proof}

Therefore, to compute or estimate $V$, we determine the second-order permanents of $\TBEC$, $\TBSCu$, and $\TBSCl$. To do so, we adapt techniques from our recent work~\cite{Kiah.2021}, where we proposed a very different approach to exact permanent computation. 
In this subsection, we replicate and modify certain parts from~\cite{Kiah.2021} for our computation task.
Specifically, to compute the second-order permanents, we use a graph structure called {\em trellis}. 
The trellis was invented by Forney \cite{Forney.1967} over fifty years ago
to illustrate the Viterbi decoding algorithm \cite{Viterbi.1967} for
convolutional codes.
It has since been studied extensively by coding theorists; 
see \cite{Vardy.1998} for an excellent survey.

A {\em trellis} $\bbT=(\bbV, \bbE, \bbL)$ is an edge-labelled directed graph,
where $\bbV$ is the set of {\em vertices}, $\bbE$ is the~set~of ordered
pairs $(v,v')\in \bbV\times \bbV$, called {\em edges}, and $\bbL$ is the 
{\em edge-labelling function}. Specifically, $\bbL$~is~a~real-valued function that maps an edge $e\in \bbE$ to a real number. 
The defining property of a trellis is that the set $\bbV$ of vertices can
be partitioned into $\bbV_0,\bbV_1,\ldots, \bbV_M$ such that every edge
$(v,v')$ begins at $v\in \bbV_{j-1}$ and terminates at $v'\in \bbV_j$ for
some $j\in [M]$. In addition, the subsets $\bbV_0$ and $\bbV_M$
are singletons, containing two distinguished vertices, called the {\em root} and the {\em toor}, respectively.  

For each path $\vp$ defined by its edge sequence $e_1e_2\cdots e_t$, we associate the path $\vp$ with its {\em value}	$\val(\vp)\triangleq \prod_{j=1}^t \bbL(e_j)$.  Furthermore, we use $\bbP(\bbT)$ to denote the {\em multiset} of all paths from the root to toor and we are interested in the {\em value} of the trellis, defined by $\val(\bbT)\triangleq \sum_{\vp\in\bbP(\bbT)} \val(\vp)$. 
Then the celebrated Viterbi algorithm%
\footnote{
	We omit a detailed description of the Viterbi algorithm here and instead, refer the interested reader to the meticulous exposition in \cite{Vardy.1998}. 
	The original algorithm was introduced by Viterbi \cite{Viterbi.1967} in 1967 to perform maximum-likelihood decoding of convolutional codes. 
	In~\cite{Kiah.2021}, we described the Viterbi algorithm in the context of permanent computations.} is an application of the dynamic programming
method that computes  $\val(\vp)$ efficiently. 
Specifically, the Viterbi algorithm computes $\val(\bbT)$ using exactly $|\bbE|-\deg({\rm root})$ multiplications and exactly $|(\bbE|-|\bbV|+1)$ additions.

\begin{example}\label{exa:nonminimal}
Set $M=3$ and we consider a $3\times 3$-matrix $\vT$. 
Then we can use the trellis $\bbT$ in Figure~\ref{fig:nonminimal} to compute $\permsec(\vT)$.
In particular, $\val(\bbT) = \permsec(\vT)$.
We can readily check that $|\bbV_0|=|\bbV_3|=1$, $|\bbV_1|=9$ and $|\bbV_2|=36$ and so, $|\bbV|=47$ and $|\bbE|=81$.
Therefore, the Viterbi algorithm computes $\val(\bbT) = \permsec(\vT)$ using 72 multiplications and 35 additions.
\end{example}

Consider some trellis $\bbT$. 
Since the complexity of evaluating the value of $\bbT$ depends on its number of vertices and edges, one key objective in the study of trellises in coding theory is to find a ``smaller'' trellis $\bbT'$ so that $\bbP(\bbT')=\bbP(\bbT)$. 
Formally, we say that $\bbT^*=(\bbV^*,\bbE^*,\bbL^*)$ is a {\em minimal trellis} for $\bbT$ if $\bbP(\bbT^*)=\bbP(\bbT)$ and the following holds:
\begin{quote}
for all other trellises $\bbT'=(\bbV,\bbE,\bbL)$ such that $\bbP(\bbT')=\bbP(\bbT)$, 
		we have that $|\bbV^*_j|\le |\bbV_j|$ for all $j=0,1,\ldots, M$.
\end{quote}

There are examples of trellises that do not admit a minimal
trellis representation.  Nevertheless, if the $\bbT$ obeys certain properties,
we have that $\bbT$ admits a unique minimal trellis. Moreover, there is a simple {\em merging} procedure that finds this trellis \cite{Kschischang.1996, VardyK.1996}.
Here, we omit the details of the merging process. 
Instead, we simply describe the minimal trellis $\bbT^*_M$ that computes $\permsec(\vT)$ for some given matrix $\vT$.
 
\begin{figure}[!t]
	\begin{tikzpicture}[x = 5mm, y = 7mm]
		\small
		\tikzstyle{state}=[rectangle,fill=white,draw,line width=0.5mm, text width = 9mm, align=center]
		\tikzstyle{label}=[fill=white, inner sep=1pt]
		\node[state] at (0,0) (0) {$\varnothing$};
		\node[state] at (10,5) (1) {$1$};
		\node[state] at (10,3) (2) {$2$};
		\node[state] at (10,1) (3) {$3$};
		\node[state] at (10,-1) (1+2) {$1,2$};
		\node[state] at (10,-3) (1+3) {$1,3$};
		\node[state] at (10,-5) (2+3) {$2,3$};
		\node[state] at (20,5) (12) {$12$};
		\node[state] at (20,3) (13) {$13$};
		\node[state] at (20,1) (23) {$23$};
		\node[state] at (20,-1) (12+13) {$12,13$};
		\node[state] at (20,-3) (12+23) {$12,23$};
		\node[state] at (20,-5) (13+23) {$13,23$};
		\node[state] at (30,0) (123) {$123$};
		
		\draw[green] (0) -- (1);
		\draw[green] (0) -- (2);
		\draw[green] (0) -- (3);
		\draw[red] (0) -- (1+2);
		\draw[red] (0) -- (1+3);
		\draw[red] (0) -- (2+3);
		
		\draw[green] (1) -- (12);
		\draw[green] (1) -- (13);
		\draw[red]   (1) -- (12+13);
		\draw[green] (2) -- (12);
		\draw[green] (2) -- (23);
		\draw[red]   (2) -- (12+23);		
		\draw[green] (3) -- (13);
		\draw[green] (3) -- (23);
		\draw[red]   (3) -- (13+23);		
		
		\draw[blue]    (1+2) -- (12);	
		\draw[blue]    (1+2) -- (12+23);	
		\draw[blue]    (1+2) -- (12+13);	
		\draw[green]   (1+2) -- (13+23);	
		\draw[blue]    (1+3) -- (13);	
		\draw[blue]    (1+3) -- (12+13);	
		\draw[blue]    (1+3) -- (13+23);	
		\draw[green]   (1+3) -- (12+23);	
		\draw[blue]    (2+3) -- (23);	
		\draw[blue]    (2+3) -- (12+23);	
		\draw[blue]    (2+3) -- (13+23);	
		\draw[green]   (2+3) -- (12+13);			
		
		\draw[green] (12) -- (123);
		\draw[green] (13) -- (123);
		\draw[green] (23) -- (123);		
		\draw[blue] (12+13) -- (123);
		\draw[blue] (12+23) -- (123);
		\draw[blue] (13+23) -- (123);		
	\end{tikzpicture}
	\caption{Minimal trellis $\bbT^*_3$ that computes $\permsec(\vT)$ where $\vT$ is a $(3\times 3)$-matrix. To reduce clutter, edge labels have been removed. Instead, {\color{green} green}, {\color{blue} blue}, and {\color{red} red} edges denote edges with labels of the form ${\color{green}T_{ij}}$, ${\color{blue}T_{ij}T_{kj}}$, and ${\color{red}2T_{ij}T_{kj}}$, respectively. We refer the reader to Definition~\ref{def:canonical} for details of the labelling. 
	Here, $\bbT^*_3$ has 14 vertices and 33 edges.}
	\label{fig:minimal}
\end{figure}
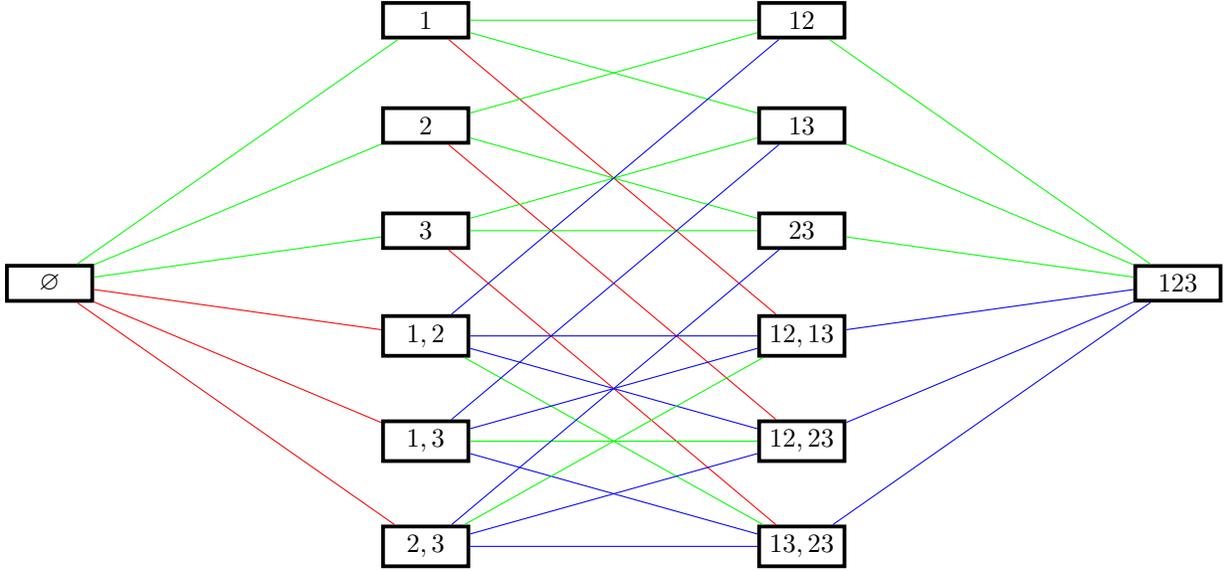

\begin{definition}[Trellis for Second-Order Permanents]\label{def:canonical}
Fix $M$ and let $\vT$ be an $M\times M$-matrix. 
Then the {\em second-order-permanent trellis $\bbT_M^*$} is defined as follows.
\begin{itemize}
\item (Vertices) For $0\le j\le M$, we set 
\begin{align*}
	\bbV^{(1)}_j &\triangleq \{ X \subset [M]: |X|=j \},\\
	\bbV^{(2)}_j & \triangleq \{ \{X_1,X_2\} : X_1, X_2 \subseteq [M], |X_1|=|X_2| = j \}\,.
\end{align*} and let $\bbV^{(1)}_j =\bbV_j \cup \bbV^{(2)}_j$. 
In other words, we have two types of vertices: the first type comprises all $\binom{M}{j}$ $j$-subsets of $[M]$; while the second type comprises all ${\binom{M}{j}\choose 2}$ unordered pairs of $j$-subsets.

\item (Edges and edge labels) First, we consider $X\in \bbV^{(1)}_{j-1}$ and so, $|X|=j-1$.
\begin{itemize}
	\item For $i\notin X$, we draw the edge from $X$ to $X\cup\{i\}$ and we label this edge with $T_{ij}$.
	\item For $i,k\notin X$ and $i< k$, we draw an edges $X$ $\{X\cup\{i\},X\cup\{k\}\}$, and  we label this edge with $2T_{ij}T_{kj}$.
\end{itemize}

Next, we consider $\{X_1,X_2\}\in \bbV^{(2)}_{j-1}$ and so, $|X_1|=|X_2|=j-1$.
\begin{itemize}
	\item Suppose that there exists $|X_3|=j$ such that $X_3=X_1\cup X_2$. Then we draw an edge from $\{X_1,X_2\}$ to $X_3$  and label this edge with $T_{ij}T_{kj}$, where 
	$\{i\} = X_3\setminus X_1$ and $\{k\} = X_3\setminus X_2$.
	\item If $i\ne k$, $i\notin X_1$ and $k\notin X_2$, we draw an edge from $\{X_1,X_2\}$ to $\{X_1\cup\{i\},X_2\cup\{k\}\}$ and  we label this edge with $T_{ij}T_{kj}$.
	\item If $i\notin X_1\cup X_2$, we draw an edge from $\{X_1,X_2\}$ to $\{X_1\cup\{i\},X_2\cup\{i\}\}$ and  we label this edge with $T_{ij}$.
\end{itemize}
\end{itemize}

\end{definition}

\begin{example}[Example~\ref{exa:nonminimal} continued]\label{exa:minimal}
	As before, set $M=3$ and we consider a $3\times 3$-matrix $\vT$. 
	Then second-order-permanent trellis $\bbT_3^*$ in Figure~\ref{fig:minimal}.
	We check that $\bbT_3^*$ has 14 vertices and 33 vertices.
	Hence, applying to the Viterbi algorithm on $\bbT_3^*$, we compute $\permsec(\vT)$ using only 27 multiplications and 20 additions. The number of arithmetic operations is significantly lesser than the number in Example~\ref{exa:nonminimal}.
\end{example}

\begin{proposition}
Let $\bbT_M^* = (\bbV^*,\bbE^*, \bbL^*)$ be as defined by Definition~\ref{def:canonical}.
Then we have that 
\begin{align}
	|\bbV^*| & = \frac12 \binom{2M}{M}+2^{M-1} \sim \frac{4^M}{2\sqrt{\pi M}}\,\\
	|\bbE^*| & \le \frac{M^2}{2} \binom{2M-2}{M-1} + M(M+1)2^{M-3} \sim \frac{M^{1.5}4^{M-1}}{2\sqrt{\pi}}\,.
\end{align}
\end{proposition}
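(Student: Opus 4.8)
The plan is to count vertices and edges level by level, exploiting the partition of each layer $\bbV_j$ into the single-subset vertices $\bbV^{(1)}_j$ and the distinct-pair vertices $\bbV^{(2)}_j$ of Definition~\ref{def:canonical}.

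\textbf{Vertex count.} First I would count the two types separately. Summing over levels, the single-subset vertices contribute $\sum_{j=0}^{M}\binom{M}{j} = 2^M$. At level $j$ there are $\binom{\binom{M}{j}}{2}$ unordered pairs of distinct $j$-subsets, so the pair vertices total
\[
\sum_{j=0}^{M}\binom{\binom{M}{j}}{2} = \frac{1}{2}\sum_{j=0}^M \binom{M}{j}^2 - \frac{1}{2}\sum_{j=0}^M\binom{M}{j} = \frac{1}{2}\binom{2M}{M} - 2^{M-1},
\]
where the key step is Vandermonde's identity $\sum_{j}\binom{M}{j}^2 = \binom{2M}{M}$. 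Adding the two contributions gives $|\bbV^*| = \frac12\binom{2M}{M} + 2^{M-1}$, and the asymptotic then follows from the Stirling estimate $\binom{2M}{M}\sim 4^M/\sqrt{\pi M}$, the term $2^{M-1}$ being exponentially negligible.

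\textbf{Edge count.} I would classify every edge of $\bbT_M^*$ by the types of its two endpoints, giving four families: single$\to$single, single$\to$pair, pair$\to$single, and pair$\to$pair. The first three are of lower order and I would compute them exactly. A single-subset vertex $X$ with $|X| = \ell$ emits $M-\ell$ single$\to$single edges and $\binom{M-\ell}{2}$ single$\to$pair edges; using $\binom{M}{\ell}(M-\ell) = M\binom{M-1}{\ell}$ and the trinomial identity $\binom{M}{\ell}\binom{M-\ell}{2} = \binom{M}{2}\binom{M-2}{\ell}$, summation over $\ell$ yields $M2^{M-1}$ and $\binom{M}{2}2^{M-2}$, respectively. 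By the reversal symmetry of the trellis (or by parametrising a pair$\to$single edge by its common $(j-2)$-set and the two added symbols) the pair$\to$single family also totals $\binom{M}{2}2^{M-2}$. Together these three families contribute exactly $M(M+1)2^{M-2}$.

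For the dominant pair$\to$pair family I would bound the out-degree of each pair vertex. A vertex $\{X_1,X_2\}$ at level $\ell$ is extended by choosing a new symbol $i\notin X_1$ for one component and a new symbol $k\notin X_2$ for the other; there are at most $(M-\ell)^2$ such ordered choices, and since every pair-target is realised by at least one choice, the number of pair$\to$pair edges leaving $\{X_1,X_2\}$ is at most $(M-\ell)^2$. Summing over the $\binom{\binom{M}{\ell}}{2}$ pair vertices at each level and again invoking $\binom{M}{\ell}(M-\ell) = M\binom{M-1}{\ell}$ together with $\sum_\ell\binom{M-1}{\ell}^2 = \binom{2M-2}{M-1}$, this evaluates to $\frac{M^2}{2}\binom{2M-2}{M-1} - M(M+1)2^{M-3}$. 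Adding the exact count $M(M+1)2^{M-2} = 2\cdot M(M+1)2^{M-3}$ of the first three families cancels the negative correction and leaves exactly $|\bbE^*|\le \frac{M^2}{2}\binom{2M-2}{M-1} + M(M+1)2^{M-3}$; the asymptotic follows from $\binom{2M-2}{M-1}\sim 4^{M-1}/\sqrt{\pi M}$.

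\textbf{Main obstacle.} The delicate point is the pair$\to$pair out-degree bound: because the states are \emph{unordered} pairs, and the middle bullet of Definition~\ref{def:canonical} may produce a degenerate single-subset target when $X_1\cup\{i\} = X_2\cup\{k\}$, one must argue that the $(M-\ell)^2$ ordered symbol-choices genuinely \emph{over}-count the distinct pair-targets rather than miss some. I would settle this by observing that the map from ordered choices to targets is onto the set of reachable pair vertices, so the number of distinct pair-targets never exceeds the number of ordered choices. The remaining work is bookkeeping with the binomial identities above, which is routine once the four-family classification is fixed.
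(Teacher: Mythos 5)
Your proof is correct and takes essentially the same approach as the paper: the vertex count is identical (Vandermonde plus $\sum_j \binom{M}{j}=2^M$), and the edge bound rests on the same per-level out-degree bound $(M-j)^2$ together with the same identities $\sum_i i^2\binom{M}{i}=M(M+1)2^{M-2}$ and $\sum_i i^2\binom{M}{i}^2=M^2\binom{2M-2}{M-1}$. The only difference is bookkeeping: the paper bounds every vertex's out-degree by $(M-j)^2$ outright, whereas you count three of the four edge families exactly and bound only the pair-to-pair family -- and since $(M-j)+2\binom{M-j}{2}=(M-j)^2$, the two tallies coincide term by term.
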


\begin{proof}
First, we determine the total number of vertices. Now, for all $j$, we have that $\bbV^{(1)}_j=\binom{M}{j}$ and therefore, 
$\left|\bigcup_{j=0}^M \bbV^{(1)}_j \right| =\sum_{j=0}^M \binom{M}{j} = 2^M$.

On the other hand, for all $j$, we have that $\bbV^{(2)}_j=\binom{\binom{M}{j}}{2}$. So,

\begin{align*}
	\left|\bigcup_{j=0}^M \bbV^{(2)}_j \right| 
	&=\sum_{j=0}^M \frac{\binom{M}{j}(\binom{M}{j}-1)}{2}\\
	&= \frac12 \sum_{j=0}^M \binom{M}{j}^2-\binom{M}{j}\\
	&= \frac12 \sum_{j=0}^M \binom{M}{j}^2 - \frac12\sum_{j=0}^M\binom{M}{j}\\
	&= \frac12 \binom{2M}{M} - 2^{M-1}	
\end{align*}
Therefore, the total number of vertices is
$\frac12 \binom{2M}{M}+2^{M-1}$, as required.

\vspace{2mm}

Next, we determine the total number of edges. 
Now, for all $j$, a vertex in either $\bbV^{(1)}_j $ or $\bbV^{(2)}_j$ has out-degree at most $(M-j)^2$. 
Therefore,  the total number of edges exiting vertices in $\bigcup_{j=0}^M \bbV^{(1)}_j$ is at most 
\begin{equation*}
	\sum_{j=0}^{M} (M-j)^2 \binom{M}{j} = M(M+1) 2^{M-2}.
\end{equation*}

On the other hand, the total number of edges exiting vertices in $\bigcup_{j=0}^M \bbV^{(2)}_j$ is at most
\begin{align*}
	\sum_{j=0}^{M} (M-j)^2 \binom{\binom{M}{j}}{2} 
	& = \frac12 \sum_{j=0}^{M} (M-j)^2 \left(\binom{M}{j}^2-\binom{M}{j}\right)\\
	& = \frac{M^2}{2} \binom{2M-2}{M-1} - M(M+1)2^{M-3}.
\end{align*}

Therefore, we have at most
$\frac{M^2}{2} \binom{2M-2}{M-1} + M(M+1)2^{M-3}$ edges, as required. 
Throughout this proof, we have used a number of combinatorial identities obtained from \cite{oeis:A037966,oeis:A001788,oeis:A000984} (detailed proofs can be found in the references therein).
\end{proof}

Therefore, we have the following method that computes $\permsec(\vT)$ in $O(M^{1.5}4^{M-1})=o((M!)^2)$ time.

\begin{corollary}
Let $\vT$ be an $M\times M$-matrix.
We can compute $\permsec(\vT)$ in $O(M^{1.5}4^{M-1})$ time using the trellis $\bbT^*_M$.
\end{corollary}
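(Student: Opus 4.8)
The plan is to prove two separate claims: \emph{correctness}, namely $\val(\bbT_M^*)=\permsec(\vT)$, and the running-time bound. The running time follows almost immediately from the preceding proposition. Recall that the Viterbi algorithm evaluates $\val(\bbT_M^*)$ using $|\bbE^*|-\deg(\mathrm{root})$ multiplications and $|\bbE^*|-|\bbV^*|+1$ additions, so the number of arithmetic operations is $O(|\bbE^*|+|\bbV^*|)$. Since the proposition gives $|\bbE^*|\sim M^{1.5}4^{M-1}/(2\sqrt{\pi})$ and $|\bbV^*|\sim 4^M/(2\sqrt{\pi M})$, we have $|\bbE^*|/|\bbV^*|=\Theta(M^2)$; hence the edge count dominates and the total running time is $O(|\bbE^*|)=O(M^{1.5}4^{M-1})$, as claimed. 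Constructing $\bbT_M^*$ from Definition~\ref{def:canonical} also costs $O(|\bbV^*|+|\bbE^*|)$, so it does not affect the bound.

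The substance is the correctness claim. I would first set up an auxiliary \emph{ordered} trellis $\widehat{\bbT}_M$ whose vertices at level $j$ are the ordered pairs $(S_1,S_2)$ of $j$-subsets of $[M]$, with an edge from $(S_1,S_2)$ to $(S_1\cup\{i\},S_2\cup\{k\})$ for each $i\notin S_1$ and $k\notin S_2$, labelled $\phi(T_{ij},T_{kj})$. Here $S_1$ and $S_2$ track the values already used by $\sigma$ and $\tau$ in the first $j$ columns. The key observation is a Markov property: the column-$j$ factor $\phi(T_{\sigma(j)j},T_{\tau(j)j})$ depends on the history only through the two used-value sets, so a root-to-toor path of $\widehat{\bbT}_M$ corresponds to a unique pair of permutations $(\sigma,\tau)$ and carries value $\prod_{j\in[M]}\phi(T_{\sigma(j)j},T_{\tau(j)j})$. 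Summing over all paths then yields $\val(\widehat{\bbT}_M)=\permsec(\vT)$ directly from the definition of the second-order permanent.

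Next I would show that $\bbT_M^*$ of Definition~\ref{def:canonical} is exactly the quotient of $\widehat{\bbT}_M$ under the identification $(S_1,S_2)\sim(S_2,S_1)$, and that this merging preserves the trellis value. Because $\phi$ is symmetric, the states $(S_1,S_2)$ and $(S_2,S_1)$ have identical outgoing-label structure onto correspondingly identified successors, so merging them preserves $\val$: the forward values add, $F(\{S_1,S_2\})=F(S_1,S_2)+F(S_2,S_1)$, and each merged edge carries the common label. It then remains to match the three edge types of Definition~\ref{def:canonical} against the ordered edges of $\widehat{\bbT}_M$. An edge $X\to X\cup\{i\}$ labelled $T_{ij}$ records $\sigma(j)=\tau(j)=i$, the symmetric case where $\phi=T_{ij}$; the edges among pair states labelled $T_{ij}T_{kj}$ or $T_{ij}$ record the asymmetric and symmetric updates of an off-diagonal state; and the single diagonal-to-off-diagonal edge $X\to\{X\cup\{i\},X\cup\{k\}\}$ acquires the label $2T_{ij}T_{kj}$ precisely because the two ordered transitions $(X,X)\to(X\cup\{i\},X\cup\{k\})$ and $(X,X)\to(X\cup\{k\},X\cup\{i\})$ collapse onto the same merged successor. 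Since the unique level-$M$ vertex is the single-set state $[M]$, combining these gives $\val(\bbT_M^*)=\val(\widehat{\bbT}_M)=\permsec(\vT)$.

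I expect the main obstacle to be this last bookkeeping step: verifying that the quotient map is label-consistent and that the factor of $2$ appears on exactly the diagonal-splitting edges and nowhere else, with no path double-counted or dropped. Everything else — the Markov reduction that justifies tracking only the used-value sets, and the arithmetic for the time bound — is routine once the auxiliary ordered trellis $\widehat{\bbT}_M$ is in place.
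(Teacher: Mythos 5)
Your proposal is correct, and for the complexity claim itself it coincides with the paper's (implicit) argument: the corollary in the paper is a direct consequence of the preceding proposition, since the Viterbi algorithm uses $|\bbE^*|-\deg(\mathrm{root})$ multiplications and $|\bbE^*|-|\bbV^*|+1$ additions, and the proposition bounds $|\bbE^*|$ by $\frac{M^2}{2}\binom{2M-2}{M-1}+M(M+1)2^{M-3}\sim\frac{M^{1.5}4^{M-1}}{2\sqrt{\pi}}$. Where you genuinely go beyond the paper is the correctness claim $\val(\bbT^*_M)=\permsec(\vT)$: the paper never proves this, treating it as part of Definition~\ref{def:canonical} (with examples and a pointer to the vertex-merging literature, explicitly ``omitting the details of the merging process''), whereas you reconstruct it from scratch via the ordered trellis $\widehat{\bbT}_M$ and its quotient under the swap $(S_1,S_2)\sim(S_2,S_1)$. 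Your quotient argument is sound, and it is worth emphasizing that your choice to argue \emph{locally} (identical future structure of swap-paired states, so backward values agree and forward values add) is essential, not merely convenient: a global path-counting argument ``each unordered pair $\{\sigma,\tau\}$ has two ordered representatives, so multiply by $2$ once'' would be wrong, because for $M\ge 4$ a merged path can leave the diagonal more than once (e.g.\ $\sigma=1234$, $\tau=2143$ departs at $j=1$ and $j=3$), so its fiber in $\widehat{\bbT}_M$ has size $2^t$ where $t$ is the number of departures, exactly matching the product of the $t$ factor-$2$ labels your quotient places on diagonal-splitting edges. The local merging argument absorbs this bookkeeping automatically, which is precisely the obstacle you flagged; so your route is a legitimate, self-contained substitute for the paper's appeal to trellis theory, at the cost of carrying out the case analysis showing that distinct ordered out-edges of an off-diagonal state never collapse onto a common merged target (the only near-collision is the single edge returning to a type-1 vertex $X_3=X_1\cup X_2$).
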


\subsection{Simulation Results}

To end this section, we corroborate our estimates given by \eqref{upper}, \eqref{lower} and Theorem~\ref{thm:bound1} with numerical experiments. 
Specifically, we consider the binary erasure channel ${\rm BEC}(p)$ and estimate $\perror{\cC}$, 
where $\cC$ are the codes defined in Examples~\ref{exa:JEDI} and~\ref{exa:JLDI}.
Then using the methods described in Section~\ref{sec:computeV}, we determine the values of $U$ and $V$, and hence, obtain upper and lower bounds for $\perror{\cC}$ using \eqref{upper} and \eqref{lower}, respectively.
We also compute the upper bound provided by the closed formula in Theorem~\ref{thm:bound1}.
We then simulated 500,000 trials for various erasure probabilities $p$ and determined numerically the average failure rate.
The results are plotted in Figure~\ref{fig:simulations} and 
we observe that the estimates provided by \eqref{upper} and \eqref{lower} are very sharp.

\begin{figure}[!t]
\begin{center}
\begin{tabular}{p{80mm}p{80mm}}
(a)~Linear\,code of length\,five from in Example\,\ref{exa:JLDI}
& (b)~Simplex\,code of length\,seven from in Example\,\ref{exa:JEDI}
\\
\includegraphics[width=8cm]{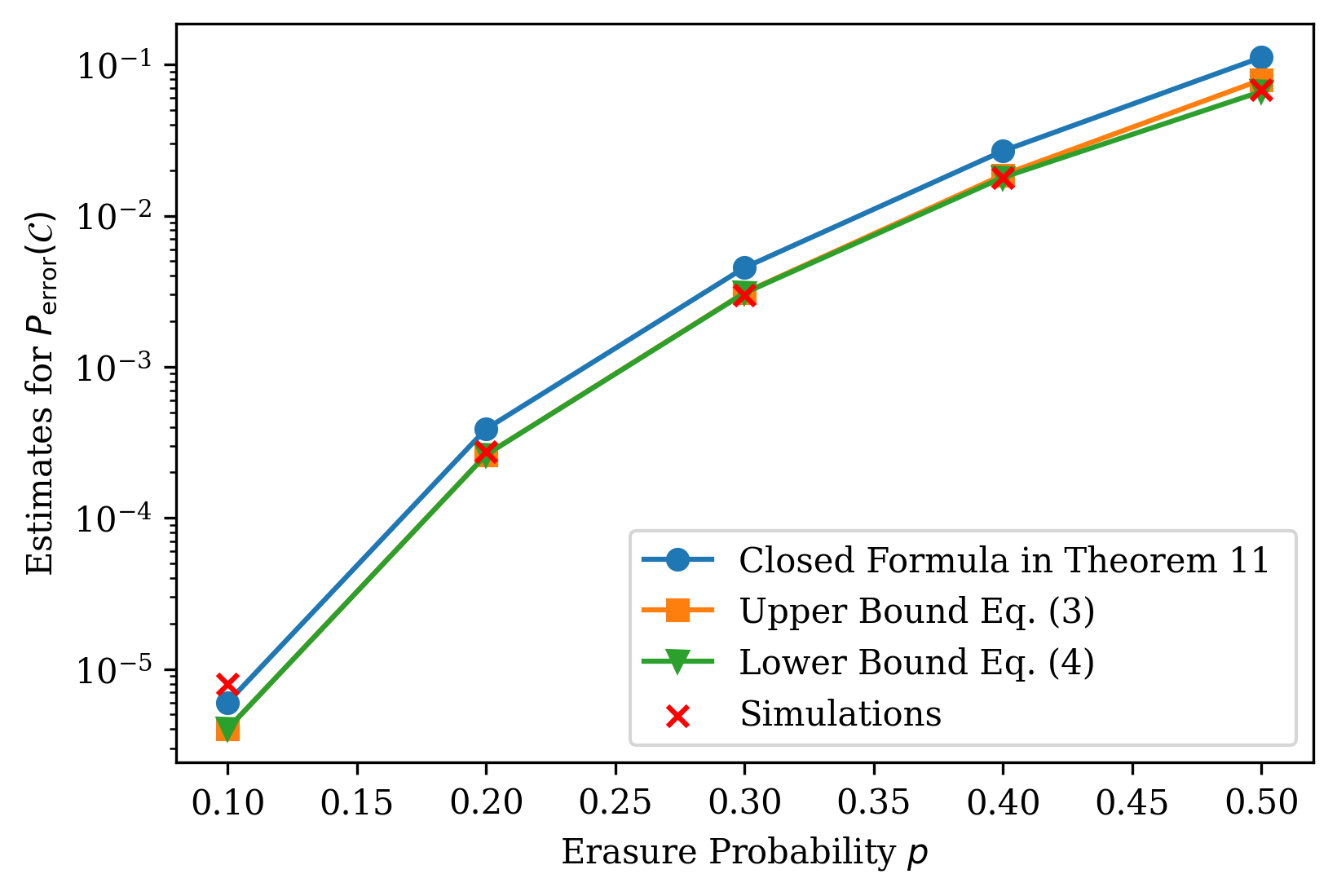} 
& \includegraphics[width=8cm]{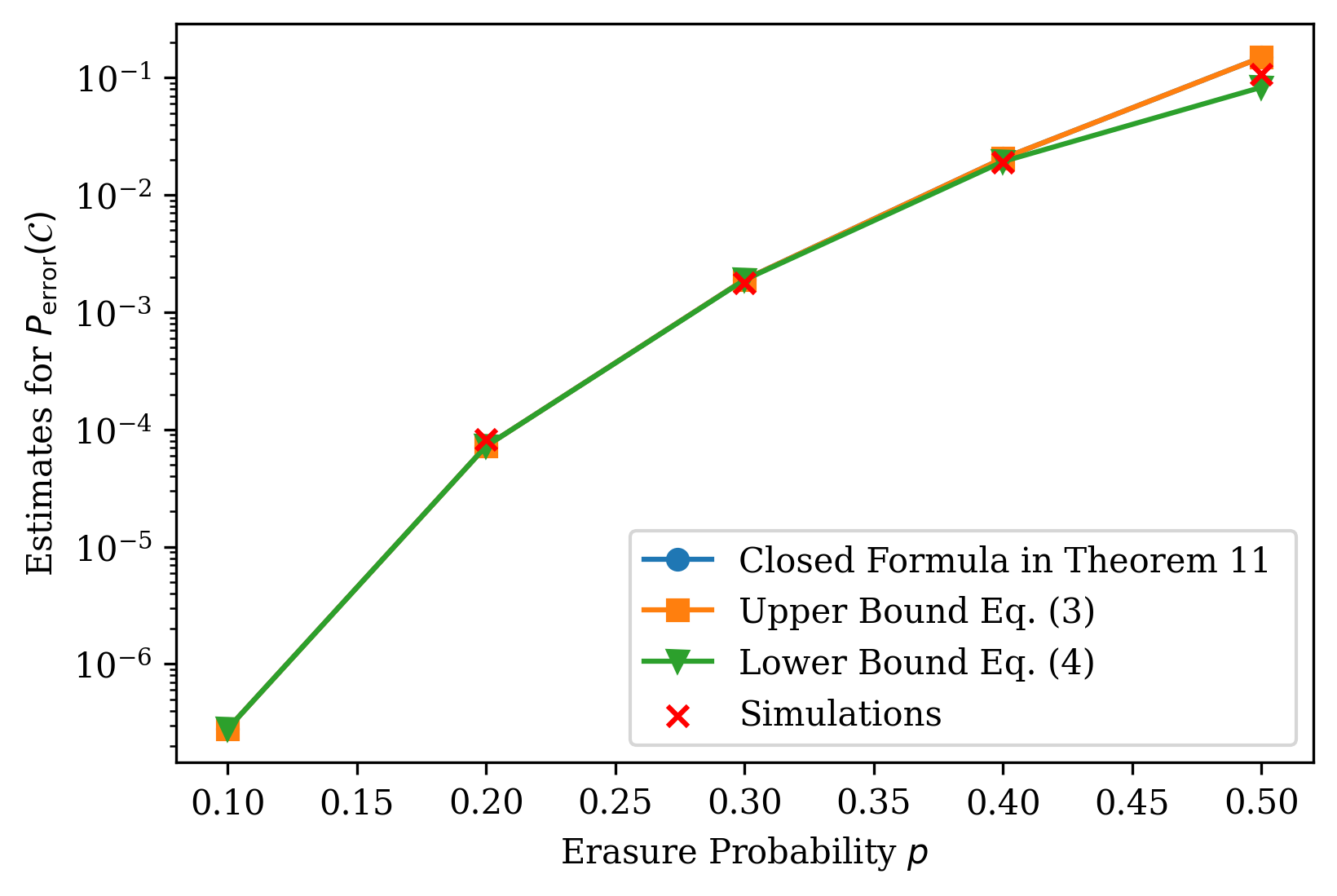}
	\end{tabular}
\end{center}
\caption{Estimates for the probability of erroneous identification $\perror{\cC}$ for various codes $\cC$}
\label{fig:simulations}
\end{figure}


\section{Discussion and Future Work}
\label{sec:discussion}

We discuss certain extensions and possible future work.

\begin{itemize}
	\item {\em General channels}. 
	As mentioned earlier, both JEDI and JMDI can extended to obtain a joint maximum likelihood decoder for other channels. 
	Specifically, suppose that a channel is described by a probability distribution $P$ where each output $\vy$ given an input $\vx$ is received with probability $P(\vy|\vx)$. 
	In Step 1 of JEDI~/~JMDI, we can create the bipartite input-output graph $\cG$ by drawing an edge $(\vx,\vy)$ whenever $P(\vy|\vx) > 0$, and then assigning the edge $(\vx,\vy)$ the cost $-\log P(\vy|\vx)$. 
	Then finding a minimum-cost perfect matching in the $\cG$ yields a permutation that maximizes the likelihood of correct identification.
	As with the analysis of JEDI, the time to find a minimum-cost perfect matching depends on the size of $\cG$, or the number of edges in $\cG$.
	In~\cite{Chrisnata2022}, Chrisnata \etal{} determined the expected number of edges in $\cG$ for both the insertion and deletion channels.
	It will be interest to study this quantity for other channels. 
	
	\item {\em Handling absentee bees}. In~\cite{TTV2020}, the authors studied the bee-identification problem for the scenario where bees were absent with certain probability. 
	In other words, instead of $M$ channel outputs, we have $M-a$ outputs where $a>0$.
	Both JEDI and JMDI can be modified to handle these scenarios. 
	In both cases, we proceed as before and simply add $a$ {\em absentee} right nodes to the bipartite graph $\cG$. 
	For the BEC case, we connect each absentee right node to all left nodes, 
	while for the BSC case, we connect each absentee right node to all left nodes and assign the cost to be zero.
	Then we find a perfect matching or a minimum-cost perfect matching as before.
	
	\item {\em Code Design}. In this paper, given a code $\cC$, we can construct the matrices $\TBEC$ and $\TBSCu$ as in Section~\ref{sec:prob}, and then estimate certain performance metrics of JEDI and JMDI. Specifically, the sum of entries of $\TBEC$ determines the running time of JEDI, while the permanents of $\TBEC$ and $\TBSCu$ provide upper bounds on the probability of erroneous identification.
	Alternatively, we can fix the probability of erroneous identification $\epsilon$, and {\em design} a code $\cC$ such that $\perror{\cC}\le \epsilon$. 
	This was partially studied in~\cite{TTV2019}. Specifically, for fixed $\epsilon$, Tandon \etal{} showed that there exists a random code $\cC$ with rate close to $R(\epsilon)$ and $\perror{\cC}\le \epsilon$. A natural question is to find an {\em explicit} family of codes that achieve the same property. 
	

\end{itemize}

\end{document}